\documentclass[12pt]{article}
\usepackage{amsmath,amssymb}

\usepackage{enumerate}

\usepackage{graphicx}

\usepackage{makeidx}
\usepackage[utf8]{inputenc}
\usepackage{wrapfig}
\usepackage{amsmath}
\usepackage{amsfonts}
\usepackage{mathrsfs}
\usepackage{amssymb}
\usepackage{latexsym}
\usepackage{amsbsy}
\usepackage{color}
\usepackage{bbm}

\usepackage{mathrsfs}


\usepackage{wasysym}

\providecommand{\otherindexspace}[1]{}

\usepackage{amsthm}

\newtheorem{theorem}{Theorem}[section]

\newtheorem{lemma}[theorem]{Lemma}
\newtheorem{proposition}[theorem]{Proposition}
\newtheorem{remark}[theorem]{Remark}

\newtheorem{definition}[theorem]{Definition}

\newtheorem{corollary}[theorem]{Corollary}

\newtheorem{assumption}[theorem]{Assumption}

\numberwithin{equation}{section}


\def\cal#1{\mathcal{#1}}

\def \H{\mathbb {H}}
\def \R{\mathbb {R}}

\usepackage{fancyhdr}
\newcommand{\dproof}{\noindent {Proof.} \quad}
\newcommand{\fproof}{\hfill $\square$ \bigskip}



\def \H{I\!\!H}

\def \R{I\!\!R}

\addtolength{\oddsidemargin}{-0.1 \textwidth}
\addtolength{\textwidth}{0.2 \textwidth}
\addtolength{\topmargin}{-0.1 \textheight}
\addtolength{\textheight}{0.2\textheight}

\def\Fc{{\cal F}}

\def\Dzw1#1{\frac{\partial^2 #1}{\partial z \partial w_1}}

\def\Dzb1#1{\frac{\partial^2 #1}{\partial z \partial b_1}}

\def\EB{\mathbb{E}}

\def\R{{\bf R}}

\def\1B{\text{1\!\!I}}


\def\EB{\mathbb{E}}

\def\R{{\bf R}}

\def\1B{\text{1\!\!I}}

\def\EB{\mathbb{E}}

\pagestyle{fancy}

\pagestyle{plain}
\usepackage{graphics}
\usepackage{graphicx}

\DeclareGraphicsExtensions{.eps,.bmp,.jpg,.pdf,.mps,.png,.gif}

\makeatletter
\def\titre{\@title}
\makeatother

\title{BSDEs with default jump}

\author{Roxana Dumitrescu\thanks{Department of Mathematics, King's College London, United Kingdom, email: \textbf{roxana.dumitrescu@kcl.ac.uk}} \and Marie-Claire Quenez \thanks{LPMA,
Université Paris 7 Denis Diderot, Boite courrier 7012, 75251 Paris cedex 05, France, email: \textbf{quenez@math.univ-paris-diderot.fr}} \and  Agnès Sulem
\thanks{ INRIA Paris, France, and Université Paris-Est, email: \textbf{agnes.sulem@inria.fr}}}

\begin{document}



\date{\today}

\maketitle

\begin{abstract}

We study the properties of  nonlinear Backward Stochastic Differential Equations (BSDEs) driven by a Brownian motion and a  martingale measure associated with a default jump  with intensity process $(\lambda_t)$. We give  a priori estimates 
for these equations and  prove comparison and strict comparison theorems. These results are generalized to drivers involving  a singular  process. The special case of a $\lambda$-linear driver is studied, leading to a representation of the solution of the associated BSDE in terms of a conditional expectation and an adjoint exponential semi-martingale. 
We then apply these results to nonlinear pricing  of European contingent claims in an imperfect  financial market with a totally defaultable risky asset.  The case of claims paying dividends is also studied via a singular process.


 \end{abstract}

\textbf{Key-words:} backward stochastic differential equations, nonlinear pricing, dividends, default jump, financial imperfections


\section{Introduction}

The aim of the present paper is to study  BSDEs driven by a Brownian motion and a  martingale measure associated with a default jump process with intensity process $\lambda=(\lambda_t)$. The applications we have in mind are the pricing and hedging issues for 
contingent claims in an imperfect financial market with default. 
 The theory on BSDEs driven by a Brownian motion and a Poisson random measure has been studied  extensively by several authors (we refer e.g. to Barles, Buckdahn and Pardoux \cite{babuc95}, Royer \cite{R}, Quenez and Sulem  \cite{QS1}). 
The present study relies on many arguments which are used in this literature. Nevertheless,   the treatment of a default jump requires some 
specific arguments which are not straightforward and we present here a rigorous analysis of these BSDEs with default jump. To our knowledge, there are  few results on non linear BSDEs with default jump.  The papers \cite{BER} and \cite{ABE} concern only  the existence and the uniqueness of the solution, established under different assumptions (see Remark \ref{remarq} for more details). In our paper, we first provide 
some useful a priori estimates, from which the existence and uniqueness result directly follows. We also give a representation property of the solution when the driver is $\lambda$-linear, as well as comparison and strict comparison theorems in the general case. 
We moreover allow the driver of these equations to have  some singular component, in the sense that the driver may be of the  generalized form
$g(t,\omega,y,z,k)dt + dD_t(\omega), $ where $D$ is a finite variation c\`adl\`ag process with square  integrability conditions. This allows us to treat the case 
of dividends in our financial application.

 The paper is organized as follows:  in Section \ref{sec2},  we present the theory of BSDEs with default jump. More precisely, in Section \ref{setup}, we present the mathematical setup.
 In Section \ref{exiuni}, we state some a priori estimates, from which we derive the existence and the uniqueness of the solution.
In Section \ref{Section23}, we introduce the definition of a {\em $\lambda$-linear} driver, where 
 $\lambda$ refers  to the intensity of the jump process,  which generalizes the notion of a linear driver given in the literature on BSDEs to the case with default jump.
 When the driver is {\em $\lambda$-linear}, we provide an explicit solution of the associated BSDE in terms of a conditional expectation and an adjoint exponential semi-martingale. 
 In Section \ref{compcomp}, we establish a comparison theorem, which holds under an appropriate assumption on the driver.
 We also prove a strict comparison theorem, which requires an additional assumption. Some interesting counterexamples of comparison theorems 
 are given when the assumptions  of these theorems are violated. 

 We then turn to the application in Mathematical Finance in section \ref{sec3}. We consider a financial market with a defaultable risky asset and 
we  study pricing and hedging issues for a European option paying a payoff $\xi$ at the maturity $T$ and intermediate dividends modeled via a singular process $D$. The case of a  perfect market model is first studied via 
 the theory of $\lambda$-linear BSDEs with default jump, while the case of imperfections, expressed via the nonlinearity of the wealth dynamics, 
 is then  treated by the  theory of  nonlinear BSDEs with generalized driver developed in Section \ref{sec2}.  In this  setting, the pricing system is expressed as a nonlinear expectation/evaluation $\mathcal{E}^{g, \cdot}: $ $(\xi,D) \mapsto \mathcal{E}^{^{g,D}}(\xi),$ induced by a nonlinear BSDE with default  jump (solved 
under the primitive probability measure $P$) with generalized driver $g(t, \cdot)dt+dD_t$. Properties of consistency, monotonicity, convexity, non-arbitrage of this pricing system rely on 
the properties of the associated BSDE. As an illustrative example of market imperfections, we consider the case when the seller of the option is  a large investor whose trading strategy may affect the market asset prices and the default intensity.

\section{BSDEs with a default jump}\label{sec2}

\subsection{Probability setup}\label{setup}
Let $(\Omega, \mathcal{G},\mathbb{P})$ be a complete probability space 
 equipped with two stochastic processes:
  a unidimensional standard Brownian motion $W$ and a jump process $N$ defined by 
  $N_t={\bf 1}_{\vartheta\leq t}$ for any $t\in[0,T]$, where $\vartheta$ is a random variable which modelizes a default time. We assume that this default can appear at any time that is $P(\vartheta \geq t)>0$ for any $t\geq 0$. We denote by ${\mathbb G}=\{\mathcal{G}_t, t\geq 0 \}$ the complete natural filtration of $W$ and $N$. We suppose that  $W$ is a ${\mathbb G}$-Brownian motion. 
  
 Let  $(\Lambda_t)$ be the  predictable compensator of the non decreasing process $(N_t)$.
 Note that $(\Lambda_{t \wedge \vartheta})$ is then the predictable compensator of
  $(N_{t \wedge \vartheta} )= (N_t)$. By uniqueness of the predictable compensator, 
  $\Lambda_{t \wedge \vartheta} = \Lambda_t$, $t\geq0$ a.s.
 
  We assume that $\Lambda$ is absolutely continuous w.r.t. Lebesgue's measure, so that there exists a nonnegative process $\lambda$, 
 called the intensity process, such that $\Lambda_t=\int_0^t \lambda_s ds$, $t\geq0$.
  Since $\Lambda_{t \wedge \vartheta} = \Lambda_t$,  $\lambda$ vanishes after $\vartheta$. 

We denote by $M$ the compensated martingale   which satisfies 
\begin{equation}
\label{M}
M_t  = N_t-\int_0^t\lambda_sds\,.
\end{equation}

Let $T >0$ be the finite horizon. We introduce the following sets:
\begin{itemize}
\item ${\cal S}^{2}$ 
is the set of ${\mathbb G}$-adapted RCLL processes $\varphi$ such that $\mathbb{E}[\sup_{0\leq t \leq T} |\varphi_t | ^2] < +\infty$.
\item ${\cal A}^2$  is the set of real-valued non decreasing RCLL adapted
 processes $A$ with $A_0 = 0$ and $\mathbb{E}(A^2_T) < \infty$.
\item ${\mathbb H}^2$  is the set of ${\mathbb  G}$-predictable processes such that
 $
 \| Z\|^2:= \mathbb{E}\Big[\int_0^T|Z_t|^2dt\Big]<\infty \,.
 $
\item  ${\mathbb H}^2_{\lambda}$ is the set of ${\mathbb G}$-predictable processes such that 
$\| U\|_{\lambda}^2:=\mathbb{E}\Big[\int_0^T|U_t|^2\lambda_tdt\Big]<\infty \,.$
\end{itemize}

In the following, ${\mathcal P}$ denotes the ${\mathbb G}$-predictable $\sigma$-algebra on 
$\Omega \times [0,T]$. 

Note that for each $U \in {\mathbb H}^2_{\lambda}$, we have $\| U\|_{\lambda}^2=\mathbb{E}\Big[\int_0^{T\wedge \vartheta}|U_t|^2\lambda_tdt\Big] \,$ because 
the ${\mathbb G}$-intensity $\lambda$ vanishes after $\vartheta$. 
Moreover, we can suppose that for each 
$U$ in ${\mathbb H}^2_{\lambda}$ ($=  L^2(\Omega \times [0,T], {\cal P}, dP   \lambda_t dt)$), its representant, still denoted by 
$U$, vanishes  after $\vartheta$. 
\\
Moreover,  $\mathcal{T}$ is the set of
stopping times $\tau$ such that $\tau \in [0,T]$ a.s.\, and for each $S$ in $\mathcal{T}$, 
   $\mathcal{T}_{S}$ is  the set of
stopping times
$\tau$ such that $S \leq \tau \leq T$ a.s.

We recall the martingale representation theorem in this framework 
(see e.g.  \cite{JYC}):
\begin{lemma}\label{theoreme representation}
Any ${\mathbb  G}$-local martingale $m= (m_t)_{0\leq t \leq T}$ has the representation 
\begin{equation}
\label{equation representation}
m_t  = m_0+\int_0^t z_sdW_s+\int_0^t l_sdM_s \,, \quad\forall\,t\in[0,T] \quad a.s. \,,
\end{equation}
where $z= (z_t)_{0\leq t \leq T}$ and $l= (l_t)_{0\leq t \leq T}$ are 
predictable processes such that the two above stochastic integrals are well defined.
If $m$ is a square integrable martingale, then $z \in {\mathbb H}^2$ and $l \in {\mathbb H}^2_{\lambda}$.
\end{lemma}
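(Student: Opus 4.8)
The plan is to prove the square-integrable case first and then pass to a general $\mathbb{G}$-local martingale by localization. For the square-integrable case I would work in the Hilbert space of square-integrable $\mathbb{G}$-martingales null at $0$, equipped with the inner product $(m,n)\mapsto \mathbb{E}[m_Tn_T]$, and show that the two families of stochastic integrals
\[
\mathcal{I}^W=\Big\{\int_0^\cdot z_s\,dW_s:\ z\in\mathbb{H}^2\Big\},\qquad
\mathcal{I}^M=\Big\{\int_0^\cdot l_s\,dM_s:\ l\in\mathbb{H}^2_\lambda\Big\}
\]
together span the whole space. Each is a closed subspace by the It\^o isometries $\mathbb{E}[(\int_0^T z\,dW)^2]=\|z\|^2$ and $\mathbb{E}[(\int_0^T l\,dM)^2]=\mathbb{E}[\int_0^T l_s^2\,d\langle M\rangle_s]=\|l\|_\lambda^2$, where $\langle M\rangle_t=\int_0^t\lambda_s\,ds$; they are moreover mutually orthogonal because $\langle W,M\rangle\equiv 0$, the continuous martingale $W$ and the finite-variation pure-jump martingale $M$ having no common quadratic variation. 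It therefore suffices to show that the orthogonal complement of $\mathcal{I}^W\oplus\mathcal{I}^M$ is trivial.

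To this end, suppose $m$ is a square-integrable martingale with $m_0=0$ that is orthogonal to $\mathcal{I}^W\oplus\mathcal{I}^M$, and test it against the Dol\'eans-Dade exponentials $L^{\mu,\nu}=\mathcal{E}\big(\int_0^\cdot\mu_s\,dW_s+\int_0^\cdot\nu_s\,dM_s\big)$ for bounded predictable $\mu$ and $\nu$ (with $\nu$ suitably bounded below so that $L^{\mu,\nu}>0$). Each $L^{\mu,\nu}$ solves the linear SDE $dL_t=L_{t-}(\mu_t\,dW_t+\nu_t\,dM_t)$, so $L^{\mu,\nu}_T-1\in\mathcal{I}^W\oplus\mathcal{I}^M$ and, since $m_0=0$, one gets $\mathbb{E}[m_T L^{\mu,\nu}_T]=\mathbb{E}[m_T]+\mathbb{E}[m_T(L^{\mu,\nu}_T-1)]=0$. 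The crux of the argument, which I expect to be the main obstacle, is to show that the family $\{L^{\mu,\nu}_T\}$ is total in $L^2(\mathcal{G}_T)$. Here the specific structure of $\mathbb{G}$ as the natural filtration generated by $W$ and the single default time $\vartheta$ is essential: restricting to deterministic step-function integrands and running a monotone-class argument, one recovers a generating family of $\mathcal{G}_T=\mathcal{F}^W_T\vee\sigma(N_s,\ s\le T)$, so totality forces $m_T=0$ and hence $m\equiv 0$. Granting this, $\mathcal{I}^W\oplus\mathcal{I}^M$ is the entire space, which yields the representation; the memberships $z\in\mathbb{H}^2$ and $l\in\mathbb{H}^2_\lambda$ are then read off directly from the two isometries above, since finiteness of $\mathbb{E}[m_T^2]$ simultaneously bounds $\|z\|^2$ and $\|l\|_\lambda^2$.

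Finally, for a general $\mathbb{G}$-local martingale $m$ I would localize. Applying the result to $m-m_0$, I choose a reducing sequence of stopping times $(\tau_n)$ with $\tau_n\uparrow T$ along which each stopped martingale $m^{\tau_n}$ is square-integrable, apply the $L^2$ representation on each $[0,\tau_n]$ to obtain predictable pairs $(z^n,l^n)$, and verify the consistency $z^{n+1}=z^n$, $l^{n+1}=l^n$ on $[0,\tau_n]$ by uniqueness of the decomposition. Pasting these together defines globally predictable processes $z,l$ for which the two stochastic integrals are well defined in the local sense (without the global integrability that holds only in the square-integrable case), which gives the representation in full generality.
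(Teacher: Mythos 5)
The paper does not actually prove this lemma: it is recalled as a known result with a citation to Jeanblanc--Yor--Chesney, so your proposal has to be judged as a standalone proof, and as such it has two genuine gaps. The main one is exactly at the step you flag as the crux: totality of the exponential family $\{L^{\mu,\nu}_T\}$ in $L^2(\mathcal{G}_T)$. The monotone-class argument you invoke requires a family that is (at least) closed under multiplication, and this one is not: computing the product of two Dol\'eans--Dade exponentials gives
\[
L^{\mu,\nu}_T\,L^{\mu',\nu'}_T \;=\; L^{\mu+\mu',\,\nu+\nu'+\nu\nu'}_T\;\exp\Bigl(\int_0^T \mu_s\mu'_s\,ds+\int_0^T \nu_s\nu'_s\lambda_s\,ds\Bigr),
\]
and the correction factor $\exp\bigl(\int_0^T\nu_s\nu'_s\lambda_s\,ds\bigr)$ is a \emph{random} variable because the intensity $\lambda$ is a general predictable process here (the paper nowhere assumes it deterministic or bounded). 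For the same reason the individual exponentials, $L^{\mu,\nu}_T=\mathcal{E}\bigl(\int\mu\,dW\bigr)_T\,e^{-\int_0^T\nu_s\lambda_s ds}\,(1+\nu_\vartheta{\bf 1}_{\vartheta\le T})$, entangle the Brownian and the default information through $\lambda$, so one cannot separate variables and reduce to the classical Brownian totality result. Note also that merely showing the family \emph{generates} $\mathcal{G}_T$ is not enough: generation of a $\sigma$-algebra does not imply $L^2$-totality of the linear span without an algebra/multiplicative-class structure. This step is not a technicality --- it is essentially equivalent to the predictable representation property itself, and the proofs in the credit-risk literature (e.g.\ JYC, Ch.\ 7, or Bielecki--Rutkowski) avoid it altogether: they exploit the explicit structure of $\mathcal{G}_t$-measurable variables and the so-called key lemma, $\mathbb{E}[\xi{\bf 1}_{\vartheta>t}\mid\mathcal{G}_t]={\bf 1}_{\vartheta>t}\,\mathbb{E}[\xi{\bf 1}_{\vartheta>t}\mid\mathcal{F}^W_t]/P(\vartheta>t\mid\mathcal{F}^W_t)$, to compute the martingale $\mathbb{E}[\xi\mid\mathcal{G}_t]$ explicitly before and after $\vartheta$, and then invoke the Brownian representation theorem on the pre-default part.

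The second gap is in the localization step: you assert that a general $\mathbb{G}$-local martingale admits a localizing sequence $(\tau_n)$ along which it becomes square-integrable. This is false for c\`adl\`ag local martingales in general --- stopping cannot truncate a jump, and a compensated jump whose size is integrable but not square-integrable gives a (local) martingale that is not locally in $L^2$ (the compensated compound Poisson process with heavy-tailed jump sizes is the standard counterexample). In the present filtration one expects every local martingale to be locally square-integrable because its only jump occurs at $\vartheta$ and its size is a functional of the continuous path $(W_{s\wedge\vartheta})$ and of $\vartheta$, hence controllable by stopping $W$; but that structural fact is again a consequence of the key-lemma description of $\mathcal{G}_\vartheta$, i.e.\ of precisely the machinery your proof is trying to bypass. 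The standard repair is either to prove the representation directly for uniformly integrable martingales (as in JYC) and then localize in $L^1$, or to use the Jacod--Shiryaev decomposition of an arbitrary local martingale into a locally square-integrable local martingale plus a local martingale of finite variation and to treat the finite-variation part (necessarily carried by the single jump at $\vartheta$) separately. As it stands, both the crux step and the localization step need these missing arguments supplied.
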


We now introduce the following definitions.
\begin{definition}[Driver, $\lambda$-{\em admissible} driver]\label{defd}
A  function $g$
is said to be a {\em driver} if\\
$g:\Omega  \times [0,T] \times \R^3  \rightarrow \R $; 
$(\omega, t,y, z, k) \mapsto  g(\omega, t,y,z,k) $
is $ {\cal P} \otimes {\cal B}(\R^3) 
- $ measurable, and such that
 $g(.,0,0,0) \in {\mathbb H}^2$.
 
A driver $g$ is called a $\lambda$-{\em admissible driver} if moreover there exists a constant $ C \geq 0$ such that 
$dP \otimes dt$-a.s.\,,
for each  $(y, z, k)$, $(y_1, z_1, k_1)$, $(y_2, z_2, k_2)$,
\begin{equation}\label{lip}
|g( t, y, z_1, k_1) - g( t, y, z_2, k_2)| \leq
C (|y_1 - y_2|+ |z_1 - z_2| +   \sqrt \lambda_t |k_1 - k_2 |). 
\end{equation}
The positive real $C$ is called the $\lambda$-{\em constant} associated with driver $g$.
\end{definition}

Note that condition \eqref{lip} implies that  for each $\,t > \vartheta$, since $\lambda_t=0$,
$g$ does depend on $k$. In other terms, for each $(y,z,k)$, we have: 
 $$g(t,y,z,k)= g(t,y,z,0), \quad t > \vartheta \quad dP \otimes dt-{\rm 
 a.s.}$$

 
\begin{definition}[BSDE with default jump]
Let  $g$ be a $\lambda$-{\em admissible} driver, let $\xi \in {L}^2({\cal G_T})$. 
A process
 $(Y, Z, K)$  in $ \mathcal{S}^2 \times {\mathbb H}^2 \times  {\mathbb H}^2_{\lambda}$ is said to be a solution of the BSDE with default jump associated with terminal time $T$, driver $g$ and terminal condition $\xi$ if it satisfies:
\begin{equation}\label{BSDE}
-dY_t = g(t,Y_t, Z_t,K_t ) dt  -  Z_t dW_t - K_t dM_t; \quad
Y_T=\xi.
\end{equation}
\end{definition}

 \subsection{First properties of BSDEs with a default jump}\label{exiuni}

For $\beta >0$,   $\phi \in \H^{2} $, and $l \in \H_\lambda^{2}$, we introduce the norms $\| \phi \|_{\beta}^2 := E[\int_0^T e^{\beta s} \phi_s^2 ds], $ and 
 $\| k \|_{\lambda,\beta}^2 := E[\int_0^T e^{\beta s} k_s^2 \lambda_s \, ds] $. We first show some a priori estimates for BSDEs with a default jump, from which we derive the existence and uniqueness of the solution.
 \subsubsection{A priori estimates for BSDEs with a default jump}
\begin{proposition} \label{est}
Let $\xi^1$, $\xi^2$ $\in L^2({\cal G}_T)$. Let $g^1$ and $g^2$ be two  $\lambda$-{\em admissible} drivers.
 For $i=1,2$, let $(Y^i, Z^i ,K^i)$  be  a solution of the BSDE associated with  
terminal time $T$, driver $g^i$  and terminal conditions $\xi^i$.
Let $\bar \xi := \xi^1 - \xi^2$ and $\bar g(s): = g^1(s, Y^2_s, Z^2_s, K_s^2) - g^2(s, Y^2_s, Z^2_s, K_s^2)$. For $s$ in $[0,T]$, denote $\bar Y_s := Y^1_s - Y^2_s, \,\,\, \bar Z_s := Z^1_s - Z^2_s$,  $\bar K_s := K^1_s - K^2_s $. \\
Let $ \eta, \beta >0 $ be such that 
 $\beta \geq \frac{3}{\eta} +2C $ 
and $\eta \leq \frac{1}{C^2}$.
For each $t \in [0,T]$, we then have
\begin{equation}\label{estimateY}
e^{\beta  t} (\bar Y_t)   ^2 \leq  \EB [ e^{\beta  T} \bar \xi \,^2 \mid 
{\cal G}_t ] +\ \eta \,{\mathbb E}[ \int_t^T e^{\beta  s} \bar g(s)^2  ds \mid 
{\cal G}_t ] \;\; \text{ \rm a .s.}\, 
\end{equation}
Moreover, 
\begin{equation}\label{estimateY2}
\|\bar Y \|_\beta^2 \leq  T [e^{\beta T} \EB[\bar \xi \,^2] + \eta
\|\bar g \|_\beta^2].
\end{equation}
If $\eta < \frac{1}{C^2}$, we  have 
\begin{equation}\label{estimateZ}
\|\bar Z \|_\beta^2 + \|\bar K \|_{\lambda,\beta}^2
\leq \frac{1}{1 - \eta C^2} [e^{\beta T} \EB[\bar \xi \,^2] + \eta \|\bar g \|_\beta^2].
\end{equation}
\end{proposition}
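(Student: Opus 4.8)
The plan is to apply It\^o's formula to the process $e^{\beta t}(\bar Y_t)^2$ and exploit the BSDE dynamics satisfied by $\bar Y$. Subtracting the two BSDEs, the pair $(\bar Y, \bar Z, \bar K)$ solves
\begin{equation*}
-d\bar Y_t = \big[g^1(t,Y^1_t,Z^1_t,K^1_t) - g^2(t,Y^2_t,Z^2_t,K^2_t)\big]\,dt - \bar Z_t\,dW_t - \bar K_t\,dM_t,
\end{equation*}
with $\bar Y_T = \bar\xi$. I would write the driver difference as $\bar g(t) + \big[g^1(t,Y^1_t,Z^1_t,K^1_t) - g^1(t,Y^2_t,Z^2_t,K^2_t)\big]$, so that the $\lambda$-admissibility of $g^1$ via \eqref{lip} controls the second bracket by $C(|\bar Y_t| + |\bar Z_t| + \sqrt{\lambda_t}\,|\bar K_t|)$.

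Applying It\^o's formula to $e^{\beta s}(\bar Y_s)^2$ between $t$ and $T$, I expect the drift to produce the term $\beta e^{\beta s}(\bar Y_s)^2\,ds$ from the exponential, a term $-2 e^{\beta s}\bar Y_s\,[\text{driver difference}]\,ds$ from the finite-variation part of $\bar Y$, and the quadratic-variation contributions $e^{\beta s}(\bar Z_s)^2\,ds$ from the Brownian part together with the jump term $e^{\beta s}(\bar K_s)^2\lambda_s\,ds$ coming from the compensator of the default martingale $M$. The stochastic integrals against $dW$ and $dM$ should vanish in conditional expectation (a localization argument using the $\mathcal{S}^2\times\H^2\times\H^2_\lambda$ integrability guarantees these are true martingales). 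After taking $\EB[\cdot\mid\mathcal G_t]$ and rearranging, this yields an inequality of the form
\begin{equation*}
e^{\beta t}(\bar Y_t)^2 + \EB\Big[\int_t^T e^{\beta s}\big((\bar Z_s)^2 + (\bar K_s)^2\lambda_s\big)ds \,\Big|\,\mathcal G_t\Big] + \beta\,\EB\Big[\int_t^T e^{\beta s}(\bar Y_s)^2 ds\,\Big|\,\mathcal G_t\Big] = e^{\beta T}\bar\xi^{\,2}\text{-term} + \text{driver cross terms}.
\end{equation*}

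The key estimation step is to absorb the cross terms $2e^{\beta s}\bar Y_s\,[\bar g(s) + C(|\bar Y_s|+|\bar Z_s|+\sqrt{\lambda_s}|\bar K_s|)]$ using Young's inequality. I would split $2|\bar Y_s||\bar g(s)| \le \frac{1}{\eta}(\bar Y_s)^2 + \eta\,\bar g(s)^2$, and bound $2C|\bar Y_s||\bar Z_s| \le \frac{1}{\eta}(\bar Y_s)^2 + \eta C^2(\bar Z_s)^2$ and likewise $2C|\bar Y_s|\sqrt{\lambda_s}|\bar K_s| \le \frac{1}{\eta}(\bar Y_s)^2 + \eta C^2\lambda_s(\bar K_s)^2$, while the remaining $2C(\bar Y_s)^2$ is handled directly. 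The total coefficient of $(\bar Y_s)^2$ inside the integral on the right becomes $\frac{3}{\eta}+2C$, which is precisely why the hypothesis $\beta \ge \frac{3}{\eta}+2C$ is imposed: it lets the $\beta$-term on the left dominate and drop out, delivering \eqref{estimateY}. Estimate \eqref{estimateY2} then follows by integrating \eqref{estimateY} in $t$ over $[0,T]$, taking total expectation, and using Fubini together with the trivial bound that each conditional expectation, once integrated, contributes a factor $T$. For \eqref{estimateZ}, I would return to the unconditioned It\^o identity at $t=0$ with full expectation; the $\eta C^2$ terms generated on the right now partially cancel the $(\bar Z_s)^2$ and $\lambda_s(\bar K_s)^2$ terms on the left, leaving the factor $(1-\eta C^2)$, which is strictly positive exactly under the strengthened hypothesis $\eta < \frac{1}{C^2}$, hence the division is legitimate.

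The main obstacle I anticipate is bookkeeping the jump term correctly: one must verify that the It\^o correction for the square of a process driven by $dM = dN - \lambda\,dt$ contributes exactly $e^{\beta s}(\bar K_s)^2\,d[N]_s$ with compensator $e^{\beta s}(\bar K_s)^2\lambda_s\,ds$, and that the cross term between the continuous and jump martingale parts vanishes (since $W$ and $N$ share no common jumps). A secondary technical point is justifying that the local martingales are genuine martingales so their conditional expectations are zero; this uses the membership of $(\bar Y,\bar Z,\bar K)$ in $\mathcal S^2\times\H^2\times\H^2_\lambda$ and a standard stopping-time localization followed by dominated convergence.
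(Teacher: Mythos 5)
Your proposal is correct and follows essentially the same route as the paper's own proof: It\^o's formula applied to $e^{\beta s}\bar Y_s^2$, the same decomposition of the driver difference into $\bar g$ plus a term controlled by the Lipschitz condition \eqref{lip} on $g^1$, Young's inequality yielding exactly the coefficients $\tfrac{3}{\eta}+2C$ for $\bar Y^2$, $\eta C^2$ for $\bar Z^2$ and $\lambda\bar K^2$, and $\eta$ for $\bar g^2$ (which is precisely what the hypotheses on $\beta$ and $\eta$ are designed to absorb), then integration for \eqref{estimateY2} and retention of the $(1-\eta C^2)$ factor for \eqref{estimateZ}. The only cosmetic difference is that you apply Young's inequality term by term, whereas the paper groups the three terms into a single inequality and makes the change of variable $\eta=3\varepsilon^2$; the resulting estimates are identical.
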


\begin{proof}  
By It\^o's formula applied  to the semimartingale $e^{\beta s} \bar Y_s$ 
between $t$ and $T$, we get 
\begin{align}
e^{\beta t} \bar Y_t ^2  & + \beta \int_t^T e^{\beta s} \bar Y_s^2 ds +
 \int_t^T e^{\beta s} \bar Z_s^2 ds + \int_t^T e^{\beta s}  \bar K_s^2 \lambda_s ds
 \nonumber \\  
 &=e^{\beta T} \bar Y_T ^2 +  2 \int_t^T e^{\beta s} \bar Y_s (g^1 (s, Y^1_s, Z^1_s, K^1_s) - g^2 (s, Y^2_s, Z^2_s, K^2_s)) ds  \nonumber \\
 &\quad - 2 \int_t^T e^{\beta s} \bar Y_s \bar Z_s dW_s
  -  \int_t^T e^{\beta s} (2  \bar Y_{s^-} \bar K_s +  \bar K_s^2) d M_s.  \label{russ}
\end{align}
Taking the conditional expectation given ${\cal G}_t$, we obtain
\begin{align}
e^ {\beta t} & \bar Y_t^2 
 +  E \left[\beta \int_t^T e^{\beta s}  \bar Y_s^2 ds + \int_t^T e^{\beta s} ( \bar Z_s^2 + \bar K_s^2 \lambda_s) ds \mid {\cal G}_t \right] \nonumber \\
& \leq  E  \left[   e^{\beta T} \bar Y_T ^2     \mid {\cal G}_t\right]  + 2  E  \left[    \int_t^T e^{\beta s} \bar Y_s (g^1 (s, Y^1_s, Z^1_s, K^1_s) - g^2 (s, Y^2_s, Z^2_s, K^2_s)) ds   \mid {\cal G}_t\right].
\end{align}
Now, $ g^1(s,Y^1_s, Z^1_s, K^1_s) - g^2(s,Y^2_s,Z^2_s,K^2_s)=  g^1(s,Y^1_s, Z^1_s, K^1_s) - g^1(s,Y^2_s,Z^2_s,K^2_s) + \bar g_s$.\\
Since $g^1$ satisfies condition \eqref{lip}, we derive that
$$
|g^1(s,Y^1_s, Z^1_s, K^1_s) - g^1(s,Y^2_s,Z^2_s,K^2_s)| 
  \leq C|\bar Y_s| + C|\bar Z_s| + C |\bar K_s| \sqrt \lambda_s + |\bar g_s|.
$$
Note that,  for all non negative numbers $\lambda$, $y$, 
$z$, $k$, $g$ and $\varepsilon >0$, we have\\
$ 2y (Cz + Ck \sqrt  \lambda  + g) \leq \frac{ y^2}{\varepsilon^2}+ \varepsilon^2(Cz+ Ck  \sqrt \lambda   + g)^2 \leq \frac{ y^2}{\varepsilon^2} +  3 \varepsilon^2(C^2 y^2+ C^2 k^2\lambda +g^2)$. Hence, 
\begin{align}\label{eq2a}
e^ {\beta t}  \bar Y_t^2  +  {\mathbb E} \left[\beta \int_t^T e^{\beta s}  \bar Y_s^2 ds +  \int_t^T e^{\beta s} ( \bar Z_s^2 +
 \bar K_s^2\lambda_s) ds \mid {\cal G}_t \right]    \leq \EB  \left[   e^{\beta T} \bar Y_T ^2     \mid {\cal G}_t\right]  \nonumber \\
 + {\mathbb E} \left[ (2C+\frac{ 1}{\varepsilon^2}) \int_t^T  e^{ \beta s}  \bar Y_s^2 ds + 3C^2 \varepsilon^2 \int_t^T e^{\beta s} ( \bar Z_s^2 + \bar K_s^2\lambda_s)ds
  + 3 \varepsilon^2   \int_t^T e^{\beta s}  \bar g_s^2 ds \mid {\cal G}_t \right].
\end{align}
Let us make the change of variable $\eta = 3 \epsilon^2$. Then, for each  $\beta,  \eta>0$  chosen as in the proposition,  these inequalities lead to
\eqref{estimateY}. 
By integrating  \eqref{estimateY}, we obtain \eqref{estimateY2}. Using  inequality \eqref{eq2a}, we derive \eqref{estimateZ}.
\end{proof}

\begin{remark} \label{AA29}
 By classical results on the norms of semimartingales, one similarly shows that
$
\|  \bar Y \|_{S^2}$ $ \leq$ $ K \left( \EB [\bar \xi \,^2] + \|\bar g \|_{\H^2}\right)
$, where $K$ is a positive constant only depending on $T$ and $C$.
\end{remark}
 \subsubsection{Existence and uniqueness result for BSDEs with a default jump}
  
 \begin{proposition}\label{existence} Let  $g$ be a $\lambda$-admissible driver, let $\xi \in {L}^2({\cal G_T})$. 
 There exists an unique solution   $(Y, Z, K)$  in $ \mathcal{S}^2 \times {\mathbb H}^2 \times  {\mathbb H}^2_{\lambda}$ of the 
 BSDE \eqref{BSDE}.
\end{proposition}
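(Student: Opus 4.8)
The plan is to prove existence and uniqueness of the solution to the BSDE \eqref{BSDE} by a fixed point argument on the Banach space $\H^2 \times \H^2_\lambda$ (or equivalently $\mathcal{S}^2 \times \H^2 \times \H^2_\lambda$), using the a priori estimates of Proposition \ref{est} to establish the contraction. I would equip the space with the $\beta$-weighted norms $\|\cdot\|_\beta$ and $\|\cdot\|_{\lambda,\beta}$ introduced just before the a priori estimates, since these weights are precisely what makes the relevant map a strict contraction for a suitable choice of $\beta$.

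First I would define a map $\Phi$ on $\H^2 \times \H^2_\lambda$ as follows: given $(U,V) \in \H^2 \times \H^2_\lambda$, consider the driver frozen at $(U,V)$, namely the process $g(s, U_s, V_s)$, wait --- more carefully, one freezes the $y,z,k$ arguments partially. The cleanest route is the classical Picard scheme: given $(Y,Z,K)$, set $\phi_s := g(s, Y_s, Z_s, K_s)$, which belongs to $\H^2$ because $g$ is $\lambda$-admissible (so $\phi$ differs from $g(s,0,0,0)\in\H^2$ by a term controlled by the Lipschitz constant $C$ and the norms of $Y,Z,K$), and then define $(\tilde Y, \tilde Z, \tilde K)$ to be the solution of the BSDE with the \emph{given} driver $\phi$, i.e.
\begin{equation}
\tilde Y_t = \xi + \int_t^T \phi_s\, ds - \int_t^T \tilde Z_s\, dW_s - \int_t^T \tilde K_s\, dM_s.
\end{equation}
Existence of this $(\tilde Y,\tilde Z,\tilde K)$ is immediate: define the square-integrable martingale $m_t := \EB[\xi + \int_0^T \phi_s\, ds \mid \mathcal{G}_t]$, apply the martingale representation theorem (Lemma \ref{theoreme representation}) to get predictable $\tilde Z \in \H^2$ and $\tilde K \in \H^2_\lambda$, and set $\tilde Y_t := \EB[\xi + \int_t^T \phi_s\, ds \mid \mathcal{G}_t]$; that $\tilde Y \in \mathcal{S}^2$ follows from Doob's inequality. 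This defines $\Phi(Y,Z,K) := (\tilde Y, \tilde Z, \tilde K)$.

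The heart of the argument is to show $\Phi$ is a contraction. Here I would invoke Proposition \ref{est} directly: if $(Y^1,Z^1,K^1)$ and $(Y^2,Z^2,K^2)$ are two inputs with outputs $(\tilde Y^i, \tilde Z^i, \tilde K^i)$, then the outputs solve BSDEs with the \emph{same} $\lambda$-admissible driver structure but with frozen drivers $\phi^i_s = g(s, Y^i_s, Z^i_s, K^i_s)$, so that $\bar g(s) = \phi^1_s - \phi^2_s$ is estimated by the Lipschitz property \eqref{lip} as $|\bar g(s)| \leq C(|\bar Y_s| + |\bar Z_s| + \sqrt{\lambda_s}|\bar K_s|)$. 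Applying estimates \eqref{estimateY2} and \eqref{estimateZ} (with $\bar\xi = 0$ since the terminal condition is common) yields a bound of the form
\begin{equation}
\|\tilde{\bar Y}\|_\beta^2 + \|\tilde{\bar Z}\|_\beta^2 + \|\tilde{\bar K}\|_{\lambda,\beta}^2 \leq \frac{K(\eta)}{\beta}\,\bigl(\|\bar Y\|_\beta^2 + \|\bar Z\|_\beta^2 + \|\bar K\|_{\lambda,\beta}^2\bigr),
\end{equation}
where the factor $1/\beta$ (arising from integrating the pointwise estimate in time) can be made strictly less than $1$ by choosing $\beta$ large enough, subject to the constraints $\beta \geq \frac{3}{\eta}+2C$ and $\eta < \frac{1}{C^2}$. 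The main obstacle, and the step deserving the most care, is tracking these constants so that the contraction constant is genuinely below $1$; this is where the quadratic Lipschitz dependence on $\sqrt{\lambda_t}\,|k|$ in \eqref{lip} matters, since it is exactly what guarantees that $\bar g$ is controlled in the $\lambda$-weighted norm rather than an uncontrolled one. Once the contraction is established, the Banach fixed point theorem gives a unique fixed point $(Y,Z,K)$, which by construction solves \eqref{BSDE}; uniqueness also follows independently and more cheaply by applying Proposition \ref{est} to two putative solutions with identical data ($\bar\xi = 0$, $\bar g = 0$), forcing $\|\bar Y\|_\beta = \|\bar Z\|_\beta = \|\bar K\|_{\lambda,\beta} = 0$.
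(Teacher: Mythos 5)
Your proposal is correct and follows essentially the same route as the paper: freeze the driver along a given triple, solve the resulting BSDE via the martingale representation theorem (Lemma \ref{theoreme representation}), and use the a priori estimates \eqref{estimateY2}--\eqref{estimateZ} of Proposition \ref{est} in the $\beta$-weighted norms to make the map a contraction on $\H^2 \times \H^2 \times \H^2_\lambda$, concluding by Banach's fixed point theorem. The only cosmetic differences are that the paper applies the estimates with $\lambda$-constant equal to $0$ (legitimate since the frozen driver no longer depends on the solution), so the contraction factor appears as $\eta(T+1)3C^2$ with $\beta = 3/\eta$ rather than your $K(\eta)/\beta$ phrasing, and that you spell out the ``classical computations'' for the driver-independent case which the paper leaves implicit.
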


\begin{remark}\label{remarq}
This result generalizes the existence and uniqueness result obtained in \cite{BER} under stronger assumptions.
Indeed, suppose that $\xi$ is  $\mathcal{G}_{\vartheta \vee T}$-measurable (as in \cite{BER}) and that $g$ is replaced by $g {\bf 1}_{t \leq \tau}$ (which is a $\lambda$-admissible driver), then the solution $(Y, Z, K)$ of the associated
 BSDE \eqref{BSDE} is equal to the solution of the BSDE  with random terminal time $\vartheta$, driver $g$ and terminal condition $\xi$, considered in  \cite{BER}. Note that the boundedness assumption made on the default intensity process 
$(\lambda_t)$ in \cite{BER} is not necessary to ensure this result.
\end{remark}

\dproof  We show this  result  by using the  a priori estimates given in Proposition \ref{est}. The arguments are classical and a short proof is given for completeness. Let us first consider the case when the driver $g(t)$ does not depend on the solution. 
By using the representation property of ${\cal G}$-martingales (Lemma \ref{theoreme representation}) together with classical computations, one can show that there exists a unique solution of the BSDE \eqref{BSDE} associated with terminal condition $\xi \in L^2(\Fc_T)$ and
driver process $g(t)$ $\in$ $\H^2$. 
Let us turn to the case with a general $\lambda$-admissible driver $g(t,y,z,k)$.
Denote by $\H_\beta^2$ the space $\H^2 \times \H^2 \times \H^2_\lambda$ equipped with the norm 
$\| Y, Z, K \|_\beta^2 := \| Y \|_{\beta}^2 +  \| Z \|_{\beta}^2  +  \| K \|_{\lambda,\beta}^2 $.
We define a mapping $\Phi$ from $\H_\beta^2$ into
itself as follows. Given $(U, V, L) \in \H_\beta^2$, let $(Y, Z,K) = \Phi (U, V, L)$
be the 
the solution of the BSDE  associated with driver $g^1(s) := 
g(s, U_s , V_s, L_s)$.  \\
%
Let us prove that the mapping $\Phi$ is a contraction from ${\H}_\beta^2$ into ${\H}_\beta^2$.
%
Let $(U', V', L' )$ be another element of ${\H}_\beta^2$  and let $(Y', Z',k') := 
\Phi (U', V', L')$, that is,
the solution of the RBSDE  associated with driver process $g(s, U'_s , V'_s, L'_s)$. \\
Set $\bar{U} = U - U'$, $\bar{V} = V - V'$, $\bar{L} = L - L'$, $\bar{Y} = Y - Y'$, $\bar Z = Z-Z'$,  $\bar{K} = K - K'$. \\
Let $\Delta g_\cdot :=  g(\cdot, U, V, L) - g(\cdot, U', V', L')$.
Using the estimates \eqref{estimateY2} and 
\eqref{estimateZ}  with $\lambda$-constant equal to  $0$ (since the driver $g^1$ does not depend on the solution), we derive that for all $\eta, \beta >0$ such that $\beta \geq \frac{3}{\eta}$, we have
$$
\|\bar{Y}\|_\beta^2 + \|\bar{Z}\|_\beta^2 + \|\bar{K}\|_{\lambda, \beta}^2  \leq 
\eta (T+1) \|\Delta g \|_\beta^2.$$
 Since the driver $g$ is $\lambda$-{\em admissible} with $\lambda$-constant $C$, we get
$$
\|\bar{Y}\|_\beta^2 + \|\bar{Z}\|_\beta^2 + \|\bar{K}\|_{\lambda, \beta}^2  \leq
\eta (T+1) 3C^2  (\| \bar{U}\|_\beta^2 + \| \bar{V}\|_\beta^2 + \| \bar{L}\|_{\lambda, \beta}^2),
$$
for all $\eta, \beta >0$ with $\beta \geq \frac{3}{\eta}$.
Choosing 
$\eta = \frac{1} {(T+1) 6C^2 }$ and $\beta = \frac{3}{\eta}$,
we derive that  
$ \| (\overline{Y}, \overline{Z}, \overline{K}) \|_\beta^2 \leq \frac{1}{2} 
\| (\overline{U}, \overline{V}, \overline{K}) \|_\beta^2. $
Hence, for $\beta = 18 (T+1) C^2$, $\Phi$ is a contraction from ${\H}_\beta^2$ into ${\H}_\beta^2$ and thus admits a unique fixed point $(Y, Z, K)$ in the Banach space ${\H}_\beta^2$, which 
 is the solution of BSDE~\eqref{BSDE}.
\fproof

By similar arguments, we have the following generalized result.
\begin{proposition} \label{existencegeneral}[BSDEs with default jump and ``generalized driver"] Let  $g$ be a $\lambda$-admissible driver, let $\xi \in {L}^2({\cal G_T})$ and 
let $D$ be a finite variational RCLL adapted process with square integrable total variation process. There exists an unique solution  $(Y, Z, K)$ (also denoted by $(Y^D(T, \xi), Z^D(T, \xi), K^D(T, \xi))$)  in $ \mathcal{S}^2 \times {\mathbb H}^2 \times  {\mathbb H}^2_{\lambda}$ of 
 the BSDE associated with ``generalized driver" $g(t, \cdot)dt +dD_t$ and terminal condition $\xi$, that is
\begin{equation}\label{BSDEgeneral}
-dY_t = g(t,Y_t, Z_t,K_t ) dt + dD_t -  Z_t dW_t - K_t dM_t; \quad
Y_T=\xi.
\end{equation}
\end{proposition}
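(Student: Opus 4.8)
The plan is to reduce the generalized equation \eqref{BSDEgeneral} to an ordinary BSDE of the form \eqref{BSDE}, and then invoke Proposition \ref{existence}. The natural device is the shift $\tilde Y_t := Y_t + D_t$. First I would rewrite \eqref{BSDEgeneral} in integral form, $Y_t = \xi + \int_t^T g(s,Y_s,Z_s,K_s)\,ds + (D_T - D_t) - \int_t^T Z_s dW_s - \int_t^T K_s dM_s$, and add $D_t$ to both sides. Since $Y_s = \tilde Y_s - D_s$, this yields
\begin{equation*}
\tilde Y_t = \tilde\xi + \int_t^T \tilde g(s,\tilde Y_s,Z_s,K_s)\,ds - \int_t^T Z_s dW_s - \int_t^T K_s dM_s,
\end{equation*}
with $\tilde\xi := \xi + D_T$ and $\tilde g(s,y,z,k) := g(s,y - D_s,z,k)$. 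Thus $(Y,Z,K)$ solves \eqref{BSDEgeneral} if and only if $(\tilde Y,Z,K)$ solves the ordinary BSDE with data $(\tilde\xi,\tilde g)$, the correspondence $Y \leftrightarrow \tilde Y = Y + D$ being a bijection.

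The core verification is that $(\tilde\xi,\tilde g)$ falls within the scope of Proposition \ref{existence}. I would first record that the square integrability of the total variation of $D$ gives $\sup_{0\le t\le T}|D_t| \le |D|_T \in L^2$, hence $D \in \mathcal{S}^2$ and in particular $D_T \in L^2(\mathcal{G}_T)$; this immediately yields $\tilde\xi = \xi + D_T \in L^2(\mathcal{G}_T)$. Next I would check that $\tilde g$ is a $\lambda$-\emph{admissible} driver. Because $\tilde g$ differs from $g$ only by a deterministic-in-$\omega$ (predictable) translation of the $y$-argument, the Lipschitz estimate \eqref{lip} is inherited verbatim with the same $\lambda$-constant $C$. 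For the integrability condition $\tilde g(\cdot,0,0,0)\in \H^2$, I would use \eqref{lip} to bound $|\tilde g(s,0,0,0)| = |g(s,-D_s,0,0)| \le |g(s,0,0,0)| + C|D_s|$, and then conclude from $g(\cdot,0,0,0)\in\H^2$ together with $D\in\mathcal{S}^2$ (so that $\int_0^T D_s^2\,ds \in L^1$).

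Having established that $(\tilde\xi,\tilde g)$ satisfies the hypotheses, Proposition \ref{existence} furnishes a unique solution $(\tilde Y,Z,K)\in\mathcal{S}^2\times\H^2\times\H^2_\lambda$ of the ordinary BSDE. Setting $(Y,Z,K) := (\tilde Y - D, Z, K)$ reverses the transformation; since both $\tilde Y$ and $D$ lie in $\mathcal{S}^2$, so does $Y$, and $(Y,Z,K)$ is the desired solution of \eqref{BSDEgeneral}. Uniqueness transfers back through the same bijection: any two solutions of \eqref{BSDEgeneral} give, after the shift, two solutions of the ordinary BSDE, which must coincide.

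The argument is essentially a bookkeeping reduction, so I do not expect a genuine obstacle; the only points requiring care are the integrability claims, namely deducing $D\in\mathcal{S}^2$ (and $D_T\in L^2$) from the hypothesis on the total variation of $D$, and confirming that the shifted driver still meets $\tilde g(\cdot,0,0,0)\in\H^2$. One could alternatively repeat the fixed-point contraction of Proposition \ref{existence} with the extra $dD_t$ term carried along, but the shift $\tilde Y = Y + D$ is cleaner and makes the ``by similar arguments'' precise.
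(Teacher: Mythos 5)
Your proposal is correct, but it takes a genuinely different route from the paper. The paper's proof of Proposition \ref{existencegeneral} is literally ``by similar arguments'': it reruns the contraction argument of Proposition \ref{existence}, built on the a priori estimates of Proposition \ref{est}, with the $dD_t$ term carried along (harmless, since that term does not depend on the unknowns and so does not affect the contraction constant). Your shift $\tilde Y = Y + D$ instead reduces the generalized equation to a BSDE already covered by Proposition \ref{existence}, which buys economy: nothing has to be re-estimated, and uniqueness transfers through the bijection. The one point you should patch is measurability: Definition \ref{defd} requires a driver to be $\mathcal{P}\otimes{\cal B}(\R^3)$-measurable, but $D$ is only assumed RCLL adapted (optional, not necessarily predictable), so $\tilde g(s,y,z,k)=g(s,y-D_s,z,k)$ need not be predictable in $(\omega,s)$. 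The fix is harmless: define $\tilde g$ with the left-limit process $D_{s^-}$, which is predictable; since $D_{s^-}=D_s$ for all but countably many $s$ for each $\omega$, the two choices give the same $ds$-integrals, hence the same solutions. With that adjustment (and noting $\sup_t|D_t|\le |D|_T$ uses the convention $D_0=0$, consistent with the paper's decomposition $D=A-A'$, $A,A'\in\mathcal{A}^2$), your integrability checks are exactly what is needed and the argument is complete.
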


\begin{remark}
Let $D$ be a finite variational RCLL adapted process. Its associated total variation process is square integrable if and only if $D$ can be decomposed as follows: $D=A-A',$ with $A$, $A' \in \mathcal{A}^2$.
\end{remark}

\subsection{$\lambda$-{\em linear} BSDEs with default jump}\label{Section23}

We introduce the notion of $\lambda$-{\em linear} BSDEs in our framework with default jump. 

\begin{definition}[$\lambda$-{\em linear} driver] \label{deflinear}
A driver $g$ is called $\lambda$-{\em linear} if it is of the form: 
\begin{equation}\label{ll}
g(t,y,z,k)= \varphi_t + \delta_t y+ \beta_t z+ \gamma_t \, k\, \lambda_t,
\end{equation}
where
 $(\varphi_t) \in {\mathbb H}^2$, and where $(\delta_t)$,  $(\beta_t)$ and $(\gamma_t)$ are $\R$-valued predictable processes such that $(\delta_t)$,  $(\beta_t)$ and $(\gamma_t
 \sqrt{ \lambda_t})$ are bounded. 
 \end{definition}
 
 \begin{remark}
 Note that a driver $g$ is $\lambda$-{\em linear} if and only if it is of the form: 
 \begin{equation}\label{llbis}
g(t,y,z,k)= \varphi_t + \delta_t y+ \beta_t z+ \nu_t \, k\, \sqrt{\lambda_t},
\end{equation}
where
 $(\varphi_t) \in {\mathbb H}^2$, and where $(\delta_t)$,  $(\beta_t)$ and $(\nu_t)$ are bounded $\R$-valued predictable processes.
 
 From this observation, it follows that a $\lambda$-{\em linear} driver is $\lambda$-{\em admissible}. 
   \end{remark}

We will now prove that the solution of a $\lambda$-{\em linear} BSDE, that is the solution of BSDE \eqref{BSDE} associated with a 
$\lambda$-{\em linear} driver,  can be written as a conditional expectation via an exponential semimartingale. We first show a preliminary result. 

%
\begin{proposition}\label{roro}  
Let $(\beta_s)$ and $(\gamma_s)$ be  two real-valued ${\mathbb G}$-predictable processes such that 
the random variable $\int_0^T  (\beta_r ^2 + \gamma_r ^2  \lambda _r)\,dr$ is bounded.
 Let $(\zeta_s)$ be the process satisfying the forward SDE:\\ 
$\quad d \zeta_{s}  = \zeta_{s^-} (\beta_s d W_s +\gamma_s dM_s)$ with $\zeta_0 =1$.\\
 The process $(\zeta_s)$ satisfies
the so-called Dol\'eans-Dade formula, that is
\begin{align*}
\zeta_s =  \exp\{ \int_0^s \beta_r d W_r  - \frac{1}{2} \int_0^s \beta^2_r dr \}\exp\{- \int_0^s  \gamma_r \lambda_rdr \}(1 + \gamma_{\vartheta}{\bf 1}_{ \{ s \geq \vartheta \}}).
\end{align*}
For each $T >0$, 
the process $(\zeta_s)_{0\leq s\leq T}$ is a  martingale and satisfies 
$ {\mathbb E}[\sup_{0 \leq s \leq T}\zeta_s^p] < + \infty$, for all $p \geq 2$.\\
Moreover, if $\gamma_{\vartheta} \geq -1$ (resp. $> -1$) a.s.\,, then $\zeta_s \geq 0$ (resp. $>0$) a.s. for each 
$s \in [0,T]$. 

\end{proposition}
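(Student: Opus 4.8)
The plan is to extract the explicit Dol\'eans--Dade formula first, read off the local-martingale property and the sign directly from it, and then do the real work — the true martingale property together with the $L^p$-bounds — via a multiplier trick combined with the Burkholder--Davis--Gundy inequality.

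First I would set $X_s = \int_0^s \beta_r\,dW_r + \int_0^s \gamma_r\,dM_r$, so that $\zeta = \mathcal{E}(X)$ is the stochastic exponential. Its continuous martingale part is $\int_0^\cdot \beta\,dW$ with bracket $\int_0^\cdot \beta_r^2\,dr$, and since $M$ inherits the single jump of $N$, the only jump of $X$ occurs at $\vartheta$, of size $\Delta X_\vartheta = \gamma_\vartheta$. Invoking the general Dol\'eans--Dade formula, the jump product reduces to the single factor $(1+\gamma_\vartheta)e^{-\gamma_\vartheta}$ on $\{s\ge\vartheta\}$; writing $\int_0^s \gamma_r\,dM_r = \gamma_\vartheta \mathbf{1}_{\{s\ge\vartheta\}} - \int_0^s \gamma_r\lambda_r\,dr$ and combining, the $e^{\pm\gamma_\vartheta}$ cancel and the claimed expression follows, using $(1+\gamma_\vartheta)^{\mathbf{1}_{\{s\ge\vartheta\}}} = 1 + \gamma_\vartheta \mathbf{1}_{\{s\ge\vartheta\}}$. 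Equivalently one may simply check by It\^o's formula that the right-hand side solves the linear SDE and appeal to uniqueness. Since $\zeta$ is a stochastic integral against the martingales $W$ and $M$, it is automatically a local martingale. The sign assertion is then immediate from the formula: the factors $\exp\{\int \beta\,dW - \tfrac12\int \beta^2\}$ and $\exp\{-\int \gamma\lambda\}$ are strictly positive, so the sign of $\zeta_s$ equals that of $1+\gamma_\vartheta \mathbf{1}_{\{s\ge\vartheta\}}$ (which is $1$ before $\vartheta$ and $1+\gamma_\vartheta$ afterwards); hence $\zeta_s\ge 0$ (resp. $>0$) when $\gamma_\vartheta\ge -1$ (resp. $>-1$).

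For the moment and martingale claims I would factor $\zeta = L\,\mathcal{N}$ with $L = \mathcal{E}(\int \beta\,dW)$ continuous and $\mathcal{N} = \mathcal{E}(\int \gamma\,dM)$ purely discontinuous. Boundedness of $\int_0^T \beta_r^2\,dr$ gives all $L^p$-bounds and $\mathbb{E}[\sup_s L_s^p]<\infty$ for $L$ by the standard exponential-martingale estimates. The delicate factor is $\mathcal{N}$, where $\gamma$ is only square-$\lambda$-integrable. Here I would use the multiplier device: It\^o applied to $\mathcal{N}^2$ yields $\mathcal{N}_s^2 = 1 + (\text{loc.\ mart.}) + \int_0^s \mathcal{N}_{r^-}^2 \gamma_r^2 \lambda_r\,dr$, and the point is that $Y_s := \mathcal{N}_s^2 \exp\{-\int_0^s \gamma_r^2\lambda_r\,dr\}$ is a nonnegative local martingale, hence a supermartingale; since $\int_0^T \gamma_r^2\lambda_r\,dr \le K$ is bounded, this gives $\mathbb{E}[\mathcal{N}_s^2]\le e^K$. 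Taking expectations in the same expansion also yields $\mathbb{E}[[\mathcal{N}]_T] = \mathbb{E}[\int_0^T \mathcal{N}_{r^-}^2 \gamma_r^2\lambda_r\,dr] = \mathbb{E}[\mathcal{N}_T^2]-1 \le e^K-1$, after which Burkholder--Davis--Gundy gives $\mathbb{E}[\sup_{s\le T}\mathcal{N}_s^2]<\infty$. Combined with the bound on $L$ and Cauchy--Schwarz, this produces $\mathbb{E}[\sup_{s\le T}\zeta_s^2]<\infty$; an $L^2$-bounded local martingale is a genuine uniformly integrable martingale, which settles the martingale assertion. For general $p\ge 2$ I would rerun the multiplier argument on $|\zeta_s|^p$: It\^o gives $|\zeta_s|^p = 1 + (\text{loc.\ mart.}) + \int_0^s |\zeta_{r^-}|^p\,dA_r^{(p)}$ with $dA_r^{(p)} = \big[\tfrac{p(p-1)}{2}\beta_r^2 + (|1+\gamma_r|^p - 1 - p\gamma_r)\lambda_r\big]dr$, and $|\zeta_s|^p e^{-A_s^{(p)}}$ is a nonnegative supermartingale, so the $L^p$- and running-supremum bounds follow once $A_T^{(p)}$ is controlled — for $p=2$ this compensator is exactly $\int_0^T(\beta_r^2 + \gamma_r^2\lambda_r)\,dr$, i.e.\ precisely the standing hypothesis.

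I expect the main obstacle to be exactly this last point: upgrading the local martingale to a true martingale with controlled moments \emph{without} any pointwise bound on $\beta$ or $\gamma$. A naive pathwise exponential estimate fails, because neither $\int_0^T \gamma_r\lambda_r\,dr$ nor the jump value $\gamma_\vartheta$ is controlled by $\int_0^T \gamma_r^2\lambda_r\,dr$ alone. The multiplier trick — dividing by $\exp\{A^{(p)}\}$ to manufacture a nonnegative supermartingale — is what converts the integrated hypothesis $\int_0^T(\beta_r^2+\gamma_r^2\lambda_r)\,dr \le K$ directly into the $L^2$ estimate, while the quadratic-variation identity together with Burkholder--Davis--Gundy is what promotes the pointwise bounds to bounds on the running supremum; the remaining $p>2$ cases are handled by the same supermartingale device applied to the corresponding compensator $A^{(p)}$.
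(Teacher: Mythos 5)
Your core device (the multiplier/supermartingale trick) is the right one, but the main line of your argument for the central $L^2$/martingale claim has a genuine gap at the recombination step. Having factored $\zeta = L\,\mathcal{N}$ with $L=\mathcal{E}(\int\beta\,dW)$ and $\mathcal{N}=\mathcal{E}(\int\gamma\,dM)$, you establish $\mathbb{E}[\sup_s L_s^p]<\infty$ for all $p$ but only $\mathbb{E}[\sup_s\mathcal{N}_s^2]<\infty$, and then invoke Cauchy--Schwarz to bound $\mathbb{E}[\sup_s(L_s\mathcal{N}_s)^2]$. Cauchy--Schwarz (or any H\"older pairing, since $\sup_s L_s$ is a.s.\ unbounded whenever $\beta\not\equiv0$) requires a moment of $\mathcal{N}$ strictly higher than the second, e.g.\ $\mathbb{E}[\sup_s\mathcal{N}_s^4]$, and this is \emph{not} available under the standing hypothesis: the jump size $\gamma_\vartheta$ is only square-integrable, because the compensator identity gives $\mathbb{E}[|\gamma_\vartheta|^q\mathbf{1}_{\vartheta\le T}]=\mathbb{E}[\int_0^T|\gamma_r|^q\lambda_r\,dr]$, which the hypothesis controls only for $q=2$. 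Concretely, take $\beta=0$, intensity $\lambda_r=r\,\mathbf{1}_{r\le\vartheta}$ and $\gamma_r=r^{-1/2}$: then $\int_0^T\gamma_r^2\lambda_r\,dr\le T$ (and even $\gamma\sqrt{\lambda}$ is bounded), yet $\mathbb{E}[|\gamma_\vartheta|^4\mathbf{1}_{\vartheta\le T}]=\int_0^T t^{-1}e^{-t^2/2}\,dt=+\infty$, so $\sup_s\mathcal{N}_s\notin L^4$. Hence the factorization route cannot be repaired by adjusting exponents.

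The fix is already contained in your final paragraph: run the multiplier argument on $\zeta^2$ itself rather than on $\mathcal{N}^2$. It\^o's formula gives $\zeta_s^2=\eta_s\exp\{\int_0^s(\beta_r^2+\gamma_r^2\lambda_r)\,dr\}$ with $\eta=\mathcal{E}\bigl(2\beta\cdot W+(2\gamma+\gamma^2)\cdot M\bigr)$ nonnegative, hence a supermartingale with $\mathbb{E}[\eta_T]\le1$; this yields $\mathbb{E}[\zeta_T^2]\le e^K$ directly, after which Doob's inequality (with localization and Fatou) gives the sup bound and then the martingale property. This is exactly the paper's proof, and it is precisely what your compensator formula $dA^{(2)}_r=(\beta_r^2+\gamma_r^2\lambda_r)\,dr$ specializes to at $p=2$. (Your identity $\mathbb{E}[[\mathcal{N}]_T]=\mathbb{E}[\mathcal{N}_T^2]-1$ also needs a localization step, since the martingale part of $\mathcal{N}^2$ is a priori only local; that part is routine.) Finally, you do not prove the claim for $p>2$: you leave it conditional on controlling $A^{(p)}_T$, which, as you yourself observe, the hypothesis only does for $p=2$. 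The paper settles $p>2$ by citing an induction as in Proposition A.1 of \cite{QS1} (iterating the same trick on $\eta$, which is again a stochastic exponential); that induction requires control of $\int_0^T(2\gamma_r+\gamma_r^2)^2\lambda_r\,dr$, i.e.\ of $\int_0^T\gamma_r^4\lambda_r\,dr$ --- the example above shows this is a genuine restriction, so your hesitation points at a real difficulty, but as written your argument establishes the proposition only for $p=2$.
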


\dproof By definition, the process  $(\zeta_s)$ is a local martingale. Let $T>0$.
Let us show that $ {\mathbb E}[\sup_{0 \leq s \leq T}\zeta_s^2] < + \infty.$
By It\^o's formula applied to $\zeta_s^2$, we get 
$d\zeta_s^2 = 2 \zeta_{s^-} d\zeta_s + d[\zeta,\zeta]_s$. We have
$$d[\zeta,\zeta]_s= \zeta_{s^-}^2 \beta_s^2 ds + \zeta_{s^-}^2 \gamma_s^2 dN_s.$$  Using \eqref{M}, we thus 
derive that
$$d\zeta_s^2 = \zeta_{s^-} ^2 [ 2 \beta_s dW_s + (2 \gamma_s    + \gamma_s^2) dM_s +   (\beta_s ^2 +\gamma_s^2 \lambda_s )ds ].$$
It follows that $\zeta^2$ is an exponential semimartingale which can be written: 
\begin{equation}\label{non}
\zeta_s^2 =   \eta_s \exp \{\int_0^s  (\beta_r ^2 + \gamma_r ^2  \lambda _r)\,dr \},
  \end{equation}
where $\eta$ is the exponential local martingale satisfying 
$$d \eta_{s}  = \eta_{s^-} [ 2 \beta_s dW_s+
(2 \gamma_s    + \gamma_s^2) dM_s ],$$ with $\eta_0 =1$.
By equality \eqref{non}, the local martingale $\eta$ is non negative. Hence, it is a
supermartingale, which yields that ${\mathbb E}[ \eta_T] \leq 1$. 
Now, by assumption, 
$\int_0^T  (\beta_r ^2 + \gamma_r ^2  \lambda _r)\,dr$ is bounded.
 By \eqref{non}, it follows that
$${\mathbb E}[\zeta_T^2] \leq {\mathbb E} [\eta_T]\,K \leq K,$$ where $K$ is a positive constant. By martingale inequalities, 
we derive that $ {\mathbb E}[\sup_{0 \leq s \leq T}\zeta_s^2] < + \infty.$ Hence, 
the process $(\zeta_s)_{0\leq s\leq T}$ is a  martingale. 
By an induction argument as in the proof of Proposition A.1 in \cite{QS1}, one can prove that the integrability property of 
$\zeta$ holds for all integer $p\geq 2$.

The last assertion follows from the Dol\'eans-Dade formula.
 \fproof
 
 \begin{remark}\label{moinszero}
The inequality $\gamma_{\vartheta} \geq -1$  a.s.\, is equivalent to the inequality 
$\gamma_t \geq -1$, $dt\otimes dP \text{ a.s.}$. Indeed, we have 
${\mathbb E} [{\bf 1}_{\gamma_{\vartheta} < -1}]$$ ={\mathbb E} [\int_0^{+ \infty}
{\bf 1}_{\gamma_{r} < -1}dN_r]$ $= {\mathbb E} [\int_0^{+ \infty}
{\bf 1}_{\gamma_{r} < -1}\lambda_r dr]$, because the process $(\int_0^t \lambda_r dr)$ is the ${\mathbb G}$-predictable compensator of the default jump process $N$.

\end{remark}
%
%

%
%


\begin{theorem}[Representation of the solution of a $\lambda$-{\em linear} BSDE]\label{linear} 
Let $\xi$ $\in$ $L^2({\cal G}_T)$.  Let $g$ be a $\lambda$-{\em linear} driver of the form \eqref{ll}.
Let $(Y, Z, K)$ be the solution in $S^{2} \times \H^{2} \times \H_{\lambda}^{2}$ of the BSDE associated with driver $g$ and terminal condition $\xi$, that is
\begin{align}\label{lin}
- dY_t = (\varphi_t + \delta_t Y_t + \beta_t Z_t + \gamma_t K_t \lambda_t )dt  -  Z_t  dW_t - K_t  dM_t; \;\; Y_T = \xi.
\end{align}
For each $t$ $\in$ $[0,T]$, let $(\Gamma_{t,s})_{s \in [t,T]}$ (called the {\em adjoint process}) 
be the unique solution of the following 
forward SDE
\begin{equation}\label{eq4b}
d \Gamma_{t,s}  = \displaystyle \Gamma_{t,s^-} \left[ \delta_s ds + \beta_s d W_s +
 \gamma_s dM_s\right] ; \;\; 
\Gamma_{t,t}  = 1. 
\end{equation}
The process $(Y_t)$  satisfies
\begin{equation} \label{eqqun}
 Y_t \,= \,{\mathbb E} \,[ \, \Gamma_{t,T} \,\,\xi  + \int_t^T \Gamma_{t,s}\, \varphi_s 
 ds \mid {\cal G}_t \, ], \quad 0 \leq t \leq T,\quad {\rm a.s.}\,
\end{equation}
\end{theorem}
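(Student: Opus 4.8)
The plan is to show that the process $M_s := \Gamma_{t,s}Y_s + \int_t^s \Gamma_{t,u}\varphi_u\,du$ is a martingale on $[t,T]$ and then read off \eqref{eqqun} by taking the conditional expectation given $\mathcal{G}_t$. First I would apply the product rule (It\^o's formula for the product of the two semimartingales $\Gamma_{t,\cdot}$ and $Y$) on $[t,s]$. Writing the dynamics of $Y$ from \eqref{lin} as $dY_u = -(\varphi_u + \delta_u Y_u + \beta_u Z_u + \gamma_u K_u\lambda_u)\,du + Z_u\,dW_u + K_u\,dM_u$ and using the adjoint dynamics \eqref{eq4b}, the only quadratic-covariation contributions are the continuous Brownian cross term $\Gamma_{t,u^-}\beta_u Z_u\,du$ and the jump term $\sum_{t<u\le s}\Delta\Gamma_{t,u}\,\Delta Y_u = \int_t^s \Gamma_{t,u^-}\gamma_u K_u\,dN_u$, where I use that $\Delta M_u = \Delta N_u\in\{0,1\}$ so $(\Delta N_u)^2=\Delta N_u$.

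The key computation is that all finite-variation terms cancel except $-\Gamma_{t,u^-}\varphi_u\,du$. Indeed, converting the bracket term through \eqref{M}, i.e. $dN_u = dM_u + \lambda_u\,du$, produces a drift $\Gamma_{t,u^-}\gamma_u K_u\lambda_u\,du$ which exactly cancels the driver contribution $-\Gamma_{t,u^-}\gamma_u K_u\lambda_u\,du$; similarly the driver drift $-\Gamma_{t,u^-}\delta_u Y_u\,du$ is cancelled by the drift in $Y_{u^-}\,d\Gamma_{t,u}$, and $-\Gamma_{t,u^-}\beta_u Z_u\,du$ is cancelled by the Brownian bracket. Collecting the martingale parts then gives
\begin{equation*}
d(\Gamma_{t,s}Y_s) = -\Gamma_{t,s}\varphi_s\,ds + \Gamma_{t,s^-}\bigl(Z_s + \beta_s Y_s\bigr)\,dW_s + \Gamma_{t,s^-}\bigl(K_s + \gamma_s Y_{s^-} + \gamma_s K_s\bigr)\,dM_s,
\end{equation*}
so that $M_s = \Gamma_{t,s}Y_s + \int_t^s\Gamma_{t,u}\varphi_u\,du$ is a local martingale on $[t,T]$.

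It remains to upgrade $M$ to a genuine martingale, which is the only delicate point. Rather than checking that the two stochastic integrals are separately square-integrable — awkward here since we only control $K$ in ${\mathbb H}^2_\lambda$ and not in ${\mathbb H}^2$ — I would argue by domination: a local martingale $M$ with $\mathbb{E}[\sup_{t\le s\le T}|M_s|]<\infty$ is a true (uniformly integrable) martingale. Writing $\Gamma_{t,s} = \exp\bigl(\int_t^s\delta_u\,du\bigr)\,\tilde\zeta_s$, where $\tilde\zeta$ is the Dol\'eans--Dade exponential of $\beta\,dW + \gamma\,dM$ started at $1$ at time $t$, boundedness of $\delta$ and Proposition \ref{roro} give $\mathbb{E}[\sup_{t\le s\le T}\Gamma_{t,s}^2]<\infty$. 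Combining this with $Y\in\mathcal{S}^2$ and $\varphi\in{\mathbb H}^2$ through Cauchy--Schwarz bounds $\mathbb{E}[\sup_s|\Gamma_{t,s}Y_s|]$ and $\mathbb{E}[\int_t^T\Gamma_{t,u}|\varphi_u|\,du]$, whence $\mathbb{E}[\sup_s|M_s|]<\infty$ and $M$ is a martingale. Since $\Gamma_{t,t}=1$ yields $M_t = Y_t$, while $M_T = \Gamma_{t,T}\xi + \int_t^T\Gamma_{t,u}\varphi_u\,du$, taking $\mathbb{E}[\,\cdot\mid\mathcal{G}_t]$ and using that $Y_t$ is $\mathcal{G}_t$-measurable gives exactly \eqref{eqqun}. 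The main obstacle is precisely this integrability upgrade, together with the careful bookkeeping of the jump bracket and the $dN$--$dM$ conversion that generates the drift cancellations.
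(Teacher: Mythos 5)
Your proof is correct and follows essentially the same route as the paper's: apply It\^o's product formula to $\Gamma_{t,\cdot}Y$, observe that all drift terms cancel except $-\Gamma_{t,s}\varphi_s\,ds$ (with exactly the same bookkeeping of the Brownian bracket and of the $dN$--$dM$ conversion), and conclude by taking conditional expectations once the local martingale is shown to be a true martingale. The only difference is in that last step: the paper asserts martingality of the stochastic-integral part $m$ directly from $\Gamma_{t,\cdot}\in \mathcal{S}^2$, $Y\in \mathcal{S}^2$, $Z\in{\mathbb H}^2$, $K\in{\mathbb H}^2_\lambda$ and the boundedness of the coefficients, whereas you dominate $\sup_{t\le s\le T}|M_s|$ in $L^1$ and invoke the dominated-convergence criterion for local martingales --- a clean (if anything slightly more careful) way of making the same step rigorous without estimating the jump integral separately.
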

\begin{remark}\label{moinsun}

The process  $(\Gamma_{t,s})_{s \in [t,T]}$, defined by \eqref{eq4b}, satisfies
 \begin{align*} 
  \Gamma_{t,s} =  e^{ \int_t^s \delta_s ds} \exp\{ \int_t^s \beta_r d W_r  - \frac{1}{2} \int_t^s \beta^2_r dr \}\exp\{- \int_t^s  \gamma_r \lambda_rdr \}(1 + \gamma_{\vartheta}{\bf 1}_{ \{ s \geq \vartheta >t\}}).
  \end{align*}
  
Note that the process 
$(e^{ \int_t^s \delta_s ds})_{t \leq s \leq T}$ is positive and bounded since $\delta$ is  bounded.
 Using Proposition \ref{roro}, since $\beta$ and $\gamma
 \sqrt{ \lambda}$ are bounded, we derive that $ {\mathbb E}[\sup_{t \leq s \leq T}\Gamma_{t,s}^2] < + \infty.$ Moreover,  if $\gamma_{\vartheta} \geq -1$ (resp. $> -1$) a.s.\,, then $\Gamma_{t,s} \geq 0$ (resp. $>0$) a.s. for each 
$s \in [t,T]$.
\end{remark}

\dproof Fix $t$ $\in$ $[0,T]$. 
By applying  the It\^o product formula to $Y_s \Gamma_{t,s}$, we get
\begin{align*}
-d(Y_s \Gamma_s) & =- Y_{s^-} d\Gamma_s - \Gamma_{s^-}dY_s - d [ Y, \Gamma]_s\\
& = -Y_s  \Gamma_s \delta_s ds + \Gamma_{s} \left[ \varphi_s + \delta_s Y_s + \beta_s Z_s + 
\gamma_s K_s \lambda_s  \right]ds \\
  & \quad - \beta_s Z_s \Gamma_s ds -  \Gamma_s  \gamma_s K_s \lambda_s ds - \Gamma_s 
  (Y_s  \beta_s+ Z_s) dW_s 
  - \Gamma_{s^-} 
  [K_s (1 +\gamma_s )+Y_{s^-} \gamma_s] dM_s,
 \end{align*}
(where $\Gamma_{t,s}$ is denoted by $\Gamma_s)$.
Setting $$dm_s =  -\Gamma_{t,s} (Y_s  \beta_s+ Z_s) dW_s 
  - \Gamma_{t,s^-}  [K_s(1 +\gamma_s) +Y_{s^-} \gamma_s]dM_s,$$ we get  
  $-d(Y_s \Gamma_s)=\Gamma_s \varphi_s ds - d m_s$.
   By integrating between $t$ and $T$, we obtain
  \begin{equation}\label{xx}
  Y_t=\xi \Gamma_{t,T}  +  \int_t^T \Gamma_{t,s} \varphi_s ds   - (m_T -m_t)  \quad {\rm a.s.}
  \end{equation}
By Remark \ref{moinsun}, we have  $(\Gamma_{t,s})_{ t\leq s\leq T}$ $\in$ $S^{2}$. 
Moreover, $Y$ $\in$ $S^{2}$, $Z$ $\in$ $\H^{2}$, $K$ $\in$ $\H^{2}_{\lambda}$, and $\beta$ and $\gamma$ are bounded. It follows that 
the local martingale $m= (m_s)_{t\leq s\leq T}$ is a martingale. Hence, by taking the conditional expectation in equality (\ref{xx}), we get equality (\ref{eqqun}). 
\fproof

By similar arguments, we have the  generalized representation result.
\begin{proposition} \label{representationgeneral} Let $\xi \in {L}^2({\cal G_T})$ and 
let $D$ be a finite variational RCLL adapted process with square integrable total variation process. Let $(\delta_t)$,  $(\beta_t)$ and $(\gamma_t)$ be $\R$-valued predictable processes such that $(\delta_t)$,  $(\beta_t)$ and $(\gamma_t
 \sqrt{ \lambda_t})$ are bounded. 

Let  $(Y,Z,K)$  be the solution of 
 the BSDE: 
$$
- dY_t = ( \delta_t Y_t + \beta_t Z_t + \gamma_t K_t \lambda_t )dt + dD_t  -  Z_t  dW_t - K_t  dM_t; \;\; Y_T = \xi.
$$
For each $t$ $\in$ $[0,T]$, 
we have
\begin{equation} \label{eqgeneral}
 Y_t \,= \,{\mathbb E} \,[ \, \Gamma_{t,T} \,\,\xi  + \int_t^T \Gamma_{t,s}\, dD_s \mid {\cal G}_t \, ] \quad {\rm a.s.}\,,
\end{equation}
where $(\Gamma_{t,s})_{s \in [t,T]}$ is the adjoint process defined by \eqref{eq4b}.
\end{proposition}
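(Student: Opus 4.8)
The plan is to follow the same route as in the proof of Theorem \ref{linear}, applying the It\^o product formula to $Y_s\Gamma_{t,s}$ and using the dynamics \eqref{eq4b} of the adjoint process to make the linear terms cancel. Fix $t\in[0,T]$ and abbreviate $\Gamma_s:=\Gamma_{t,s}$. Since $\Gamma$ solves \eqref{eq4b} and $Y$ solves the generalized BSDE, integration by parts gives $-d(Y_s\Gamma_s)=-Y_{s^-}d\Gamma_s-\Gamma_{s^-}dY_s-d[Y,\Gamma]_s$, into which I would substitute the two sets of dynamics exactly as in the linear case. The drift, $dW_s$ and $dM_s$ contributions arising from the $\delta_sY_s+\beta_sZ_s+\gamma_sK_s\lambda_s$ part are handled verbatim: the $\delta$, $\beta$ and $\gamma\lambda$ terms cancel against the pieces produced by $-Y_{s^-}d\Gamma_s$ and by the continuous covariation $d\langle Y^c,\Gamma^c\rangle_s=\beta_sZ_s\Gamma_{s^-}\,ds$, leaving the candidate martingale differential $dm_s=-\Gamma_s(Y_s\beta_s+Z_s)dW_s-\Gamma_{s^-}[K_s(1+\gamma_s)+Y_{s^-}\gamma_s]dM_s$, just as in Theorem \ref{linear}.

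The genuinely new ingredient is the generalized driver. Here $-\Gamma_{s^-}dY_s$ produces the extra finite-variation term $\Gamma_{s^-}\,dD_s$, and, crucially, the covariation $[Y,\Gamma]$ acquires a purely discontinuous contribution from $D$. Since $\Gamma$ jumps only at the default time $\vartheta$, with $\Delta\Gamma_\vartheta=\Gamma_{\vartheta^-}\gamma_\vartheta$, and since $\Delta Y_\vartheta=K_\vartheta-\Delta D_\vartheta$, the jump part of $[Y,\Gamma]$ over $(t,T]$ equals $(K_\vartheta-\Delta D_\vartheta)\,\Gamma_{\vartheta^-}\gamma_\vartheta\,{\bf 1}_{\{t<\vartheta\leq T\}}$. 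The $K_\vartheta$ piece is treated exactly as in Theorem \ref{linear} (its $dM$ part enters $m$ and its compensator $\Gamma_{s^-}\gamma_sK_s\lambda_s\,ds$ cancels the $\gamma_sK_s\lambda_s$ drift coming from $-\Gamma_{s^-}dY_s$). What remains in $-d[Y,\Gamma]$ is therefore the term $+\,\Delta D_\vartheta\,\Gamma_{\vartheta^-}\gamma_\vartheta\,{\bf 1}_{\{t<\vartheta\leq T\}}$.

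The key step — and the point I expect to be the main obstacle — is recognizing that this leftover jump term combines with $\int_t^T\Gamma_{s^-}\,dD_s$ to give precisely $\int_t^T\Gamma_{t,s}\,dD_s$. Indeed, using $\Delta\Gamma_s=\Gamma_{s^-}\gamma_s\Delta N_s$ one has $\int_t^T\Gamma_{t,s}\,dD_s=\int_t^T\Gamma_{t,s^-}\,dD_s+\Gamma_{\vartheta^-}\gamma_\vartheta\,\Delta D_\vartheta\,{\bf 1}_{\{t<\vartheta\leq T\}}$, which is exactly the sum obtained above. This is why \eqref{eqgeneral} features the integral against the right-continuous value $\Gamma_{t,s}$ rather than $\Gamma_{t,s^-}$, and checking this identification is the only place where the argument genuinely departs from the $\varphi_s\,ds$ case treated in Theorem \ref{linear}.

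Collecting all terms yields $-d(Y_s\Gamma_s)=\Gamma_{t,s}\,dD_s-dm_s$; integrating between $t$ and $T$ and using $\Gamma_{t,t}=1$, $Y_T=\xi$ gives $Y_t=\Gamma_{t,T}\xi+\int_t^T\Gamma_{t,s}\,dD_s-(m_T-m_t)$. It then remains to show that $m$ is a true martingale, so that its conditional expectation vanishes. By Remark \ref{moinsun} and Proposition \ref{roro} one has $(\Gamma_{t,s})_{t\leq s\leq T}\in S^2$ with finite moments of all orders, while $Y\in S^2$, $Z\in\H^2$, $K\in\H^2_\lambda$ and $\beta,\gamma\sqrt\lambda$ are bounded; this makes the $dW$ and $dM$ integrands square-integrable, so $m$ is a martingale. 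The same bound on $\sup_{s}\Gamma_{t,s}^2$, together with the square-integrability of the total variation of $D$ and the Cauchy--Schwarz inequality, guarantees that $\int_t^T\Gamma_{t,s}\,dD_s$ is integrable. Taking the conditional expectation given $\mathcal{G}_t$ then yields \eqref{eqgeneral}. Existence and uniqueness of $(Y,Z,K)$ is already supplied by Proposition \ref{existencegeneral}, so no contraction argument is needed here.
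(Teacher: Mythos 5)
Your proposal is correct and follows exactly the route the paper intends: the paper proves this proposition only by the remark ``by similar arguments'' referring to Theorem \ref{linear}, and your argument is precisely that adaptation, with the It\^o product formula applied to $Y_s\Gamma_{t,s}$ and the same martingale term $m$. Your careful bookkeeping of the jump covariation $\Delta D_\vartheta\,\Gamma_{\vartheta^-}\gamma_\vartheta$ and its recombination with $\int_t^T\Gamma_{t,s^-}\,dD_s$ into $\int_t^T\Gamma_{t,s}\,dD_s$ is exactly the detail the paper leaves implicit, and it is handled correctly.
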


\subsection{Comparison theorems for BSDEs with a default jump}\label{compcomp}

We give here a comparison theorem and a strict comparison result for BSDEs with a default jump under additional assumptions  on the driver.

\begin{theorem}[Comparison theorems for BSDEs with default jump]\label{comparisonBSDE}
Let $\xi_1$ and $\xi_2$ $\in$ $L^2({\cal G}_T)$.
Let $g_1$ and $g_2$ be two $\lambda$-{\em admissible} drivers. For $i=1,2$, let $(Y^i, Z^i, K^i)$ be a solution in $S^{2} \times \H^{2} \times \H_{\lambda}^{2}$ of the BSDE
\begin{equation}\label{eq7}
-dY^i_t  = \displaystyle   g_i (t, Y^i_t, Z^i_t, K^i_t) dt - Z^i_t dW_t - K^i_t dM_t; \quad Y^i_T  = \xi_i.
\end{equation}
(i) {\em (Comparison theorem)}. Assume that there exists a  predictable process $(\gamma_t)$ with 
 \begin{equation}\label{robis}
(\gamma_t
 \sqrt{ \lambda_t}) \;\text { bounded} \;\;   \text{ and } \quad\gamma_t \geq -1, \quad  dt\otimes dP \text{ a.s.}\,
\end{equation}
such that
\begin{equation}\label{autre}
 g_1(t, Y^2_t, Z^2_t, K^1_t) - g_1(t,Y^2_t,Z^2_t,K^2_t) \geq  \gamma_t ( K^1_t- K^2_t) \lambda_t , \;\; t \in [0,T],\; \; dt\otimes dP \text{ a.s.}
\end{equation}
Suppose also that
\begin{equation}\label{eq99}
\xi_1 \geq \xi_2 \text{ a.s. }\quad {\rm and} \quad g_1 (t, Y^2_t, Z^2_t, K^2_t) \geq g_2 (t, Y^2_t, Z^2_t, K^2_t), \;\; t \in [0,T],\; \; dt\otimes dP \text{ a.s.}
\end{equation}
We then have $Y^1_t \geq Y^2_t$ for all $t \in [0,T]$.\\
(ii) {\em (Strict Comparison Theorem)}. Suppose moreover that the second inequality in \eqref{robis} is strict,  that is $\gamma_t > -1$.
 If 
$Y_{t_0}=0$, for some $t_0 \in [0,T]$, then the inequalities in \eqref{eq99} are equalities.
\end{theorem}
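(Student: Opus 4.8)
The plan is to analyze the difference $\bar Y := Y^1 - Y^2$ (with $\bar Z := Z^1 - Z^2$, $\bar K := K^1 - K^2$ and $\bar\xi := \xi_1 - \xi_2$) and to exhibit it as the solution of a $\lambda$-\emph{linear} BSDE whose terminal condition and source term are both nonnegative, so that its representation via the adjoint process is manifestly nonnegative. Subtracting the two equations \eqref{eq7},
\[
-d\bar Y_t = \big(g_1(t,Y^1_t,Z^1_t,K^1_t) - g_2(t,Y^2_t,Z^2_t,K^2_t)\big)\,dt - \bar Z_t\,dW_t - \bar K_t\,dM_t,\qquad \bar Y_T = \bar\xi.
\]
First I would linearize in $(y,z)$: writing the difference $g_1(t,Y^1_t,Z^1_t,K^1_t)-g_1(t,Y^2_t,Z^2_t,K^1_t)$ as a telescoping sum and invoking the Lipschitz bound \eqref{lip}, I define predictable incremental-ratio processes $\delta_t,\beta_t$, both dominated by the $\lambda$-constant $C$ (hence bounded), with $g_1(t,Y^1_t,Z^1_t,K^1_t)-g_1(t,Y^2_t,Z^2_t,K^1_t) = \delta_t\bar Y_t + \beta_t\bar Z_t$.

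The crucial step, where hypotheses \eqref{autre}--\eqref{robis} enter, is the dependence on $K$. Instead of an incremental ratio in $k$ (which need not exceed $-1$ and would wreck the sign argument), I would decompose
\[
g_1(t,Y^2_t,Z^2_t,K^1_t) - g_1(t,Y^2_t,Z^2_t,K^2_t) = \gamma_t\,\bar K_t\,\lambda_t + r_t,
\]
retaining the \emph{exact} coefficient $\gamma_t$, where $r_t$ denotes the remainder. By \eqref{autre}, $r_t\ge 0$; and since $\gamma_t\sqrt{\lambda_t}$ is bounded while \eqref{lip} gives $|g_1(t,Y^2_t,Z^2_t,K^1_t)-g_1(t,Y^2_t,Z^2_t,K^2_t)|\le C\sqrt{\lambda_t}\,|\bar K_t|$, one has $|r_t|\le C'\sqrt{\lambda_t}\,|\bar K_t|$, so $(r_t)\in{\mathbb H}^2$ because $\bar K\in{\mathbb H}^2_\lambda$. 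Setting $\varphi_t := g_1(t,Y^2_t,Z^2_t,K^2_t) - g_2(t,Y^2_t,Z^2_t,K^2_t)\ge 0$ (which lies in ${\mathbb H}^2$ since $Y^2,Z^2,\sqrt{\lambda}\,K^2$ are square integrable), the nonnegative source $\psi_t := r_t + \varphi_t$ belongs to ${\mathbb H}^2$, and $\bar Y$ solves the $\lambda$-\emph{linear} BSDE with driver $\psi_t + \delta_t y + \beta_t z + \gamma_t k\lambda_t$ in the sense of Definition \ref{deflinear}.

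Applying Theorem \ref{linear} with the adjoint process $(\Gamma_{t,s})$ from \eqref{eq4b} built on $\delta,\beta,\gamma$ gives
\[
\bar Y_t = \EB\Big[\Gamma_{t,T}\,\bar\xi + \int_t^T \Gamma_{t,s}\,\psi_s\,ds \,\Big|\, {\cal G}_t\Big].
\]
Since $\gamma_t\ge -1$ by \eqref{robis}, Remark \ref{moinsun} yields $\Gamma_{t,s}\ge 0$; together with $\bar\xi\ge 0$ and $\psi_s\ge 0$ from \eqref{eq99} this gives $\bar Y_t\ge 0$, i.e. $Y^1_t\ge Y^2_t$, proving (i). For (ii), the strict inequality $\gamma_t>-1$ upgrades Remark \ref{moinsun} to $\Gamma_{t,s}>0$ a.s. If $\bar Y_{t_0}=0$ (i.e. $Y^1_{t_0}=Y^2_{t_0}$), then the nonnegative variable $\Gamma_{t_0,T}\bar\xi + \int_{t_0}^T\Gamma_{t_0,s}\psi_s\,ds$ has zero conditional expectation given ${\cal G}_{t_0}$, hence vanishes a.s.; both summands being nonnegative, each is zero, and strict positivity of $\Gamma$ forces $\bar\xi=0$ (so $\xi_1=\xi_2$) and $\psi_s=0$, hence $\varphi_s=0$, $ds\otimes dP$-a.s. on $[t_0,T]$, i.e. the inequalities in \eqref{eq99} become equalities. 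I expect the main obstacle to be precisely this $K$-linearization: the entire argument rests on keeping the exact coefficient $\gamma_t$ (which satisfies $\gamma_t\ge -1$) and absorbing all the slack into the nonnegative term $r_t$, together with checking that $r_t$ genuinely lies in ${\mathbb H}^2$.
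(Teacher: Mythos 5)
Your proof is correct, and it shares the paper's skeleton — linearization in $(y,z)$ via bounded incremental-ratio coefficients $\delta,\beta$, the process $\gamma$ of hypothesis \eqref{autre} for the $k$-direction, and the adjoint process $\Gamma_{t,\cdot}$ of \eqref{eq4b} — but it diverges from the paper in one genuine way, namely how the inequality \eqref{autre} is exploited. The paper keeps it as an inequality: it deduces $h_s \geq \delta_s \bar Y_s + \beta_s \bar Z_s + \gamma_s \bar K_s \lambda_s + \varphi_s$ (its inequality \eqref{eq2}), then \emph{redoes} the It\^o product computation from the proof of Theorem \ref{linear} on $\bar Y_s \Gamma_{t,s}$, using the nonnegativity of $\Gamma$ (this is exactly where $\gamma_t \geq -1$ enters) to propagate the inequality through the dynamics and arrive at the one-sided representation \eqref{comp}. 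You instead make the decomposition exact by introducing the nonnegative remainder $r_t := g_1(t,Y^2_t,Z^2_t,K^1_t) - g_1(t,Y^2_t,Z^2_t,K^2_t) - \gamma_t \bar K_t \lambda_t$ and absorbing it into the source $\psi = r + \varphi$, so that $\bar Y$ becomes (by uniqueness, Proposition \ref{existence}) \emph{the} solution of an honest $\lambda$-linear BSDE, to which Theorem \ref{linear} applies as a black box and yields a two-sided representation of $\bar Y_t$. What your route buys: no It\^o computation is repeated, the positivity argument is confined to Remark \ref{moinsun}, and the exact representation makes part (ii) immediate (zero conditional expectation of a nonnegative variable, then strict positivity of $\Gamma$). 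What it costs: you must verify that the constructed driver is genuinely $\lambda$-linear in the sense of Definition \ref{deflinear}, i.e. $\psi \in {\mathbb H}^2$ — your bound $|r_t| \leq \bigl(C + \|\gamma\sqrt{\lambda}\|_\infty\bigr)\sqrt{\lambda_t}\,|\bar K_t|$ with $\bar K \in {\mathbb H}^2_\lambda$ handles $r$, and for $\varphi$ note the check also invokes $g_i(\cdot,0,0,0) \in {\mathbb H}^2$ from Definition \ref{defd}, not only the square integrability of $(Y^2, Z^2, \sqrt{\lambda}K^2)$ — a step the paper's inequality route never needs to make explicit. Both arguments use $\gamma_t \geq -1$ (resp. $> -1$) only through the nonnegativity (resp. positivity) of the adjoint process.
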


\dproof 
Setting $\bar{Y}_s = Y^1_s - Y^2_s$ ; $\bar{Z}_s = Z^1_s - Z^2_s$ ; $\bar{K}_s = K^1_s - K^2_s,$ we have
  $$
  -d \bar{Y}_s  \displaystyle =  h_s ds - \bar{Z}_s dW_s - \bar{K}_s dM_s; \quad
   \bar{Y}_s  = \xi_1 - \xi_2,
   $$
where $h_s:=g_1 (s, Y^1_s, Z^1_s, K^1_s) - g_2(s, Y^2_s, Z^2_s, K^2_s)$. \\
Set  $\delta_s  := (g_1(s,Y^1_{s^-}, Z^1_s, K^1_s) - g_1(s, Y^2_{s^-}, Z^1_s, K^1_s))/ \bar{Y}_s$ if 
$\bar{Y}_s \neq 0$, and $0$ otherwise.\\
Set  $\beta_s  := (g_1(s,Y^2_{s^-}, Z^1_s, K^1_s) - g_1(s, Y^2_{s^-}, Z^2_s, K^1_s))/ \bar{Z}_s$ if 
$\bar{Z}_s \neq 0$, and $0$ otherwise.\\
   By classical linearization techniques, we obtain
\begin{equation*}\label{in}
h_s  = \delta_s \bar{Y}_s + \beta_s \bar{Z}_s 
  + g_1(s, Y^2_s, Z^2_s, K^1_s) - g_1(s,Y^2_s,Z^2_s,K^2_s)    + \varphi_s,
\end{equation*}
where $\varphi_s := g_1(s,Y^2_{s^-}, Z^2_s, K^2_s) - g_2(s, Y^2_{s^-}, Z^2_s, K^2_s)$.
Using the assumption \eqref{autre}, we get
\begin{equation}\label{eq2}
  \,h_s  \geq  \delta_s \bar{Y}_s + \beta_s \bar{Z}_s + \gamma_s\, \,\bar{K}_s \lambda_s 
  +\varphi_s \,\, \quad ds \otimes dP-{\rm a.s.}
  \end{equation}

 Since $g_1$ satisfies condition \eqref{lip}, the predictable processes $\delta$ and $\beta$ are bounded. 
Fix $t$ $\in$ $[0,T]$. Let $\Gamma_{t,.}$ be the process defined by (\ref{eq4b}). 
Since $\delta$, $\beta$ and $\gamma  \sqrt{ \lambda}$ are bounded, it follows from Remark \ref{moinsun}  that 
$\Gamma_{t,.}$ $\in$ $S^2$. 
Also, since $\gamma_s  \geq -1$, we have $\Gamma_{t,.} \geq 0$ a.s.\,


By  It\^o's formula and similar computations as in the
 proof of Theorem \ref{linear}, we derive that
\begin{align*}
-d(\bar Y_s \Gamma_{t,s}) & =    \Gamma_{t,s}(h_s  - \delta_s \bar{Y}_s - \beta_s \bar{Z}_s -
 \gamma_s \,\bar{K}_s \,\lambda_s)\,ds - dm_s,
 \end{align*}
 where $m$ is a martingale (because $\Gamma_{t,.}$ $\in$ $S^2$, $\bar Y$ $\in$ $S^{2}$, $\bar Z$ $\in$ $\H^{2}$, $\bar K$ $\in$ $\H^{2}_{\lambda}$ and $\beta$, 
 $\gamma  \sqrt{ \lambda}$ are bounded). Using the inequality (\ref{eq2}) together with the non negativity of $\Gamma$, we thus get
$
-d(\bar Y_s \Gamma_{t,s})
  \geq   \Gamma_{t,s} \varphi_s ds - d m_s 
 $.
 By integrating between $t$ and $T$  and by taking the conditional expectation, we obtain 
%
\begin{equation}\label{comp}
  \bar{Y}_t \,\geq \,{\mathbb E}\, [\, \Gamma_{t,T} \,(\xi_1 - \xi_2) + \int_t^T 
  \Gamma_{t,s}\, \varphi_s ds \mid {\cal G}_t ],\;\; 0 \leq t \leq T,\quad {\rm a.s.}\,
\end{equation}
By assumption \eqref{eq99}, $\varphi_s \geq 0$   and $\xi_1 - \xi_2$ $\geq 0$, which, together with the non negativity of 
$\Gamma_{t,T}$, implies that $
\bar Y= Y^1 - Y^2 \geq 0$. The assertion (i) thus follows. Suppose now that $\gamma_t > -1$. By Remark \ref{moinsun},  $\Gamma_{t,T}>0$ a.s.\, The assertion (ii) thus follows from \eqref{comp}.
\fproof

We give here some counter-examples related to the comparison theorems for BSDEs with a default jump.

\begin{remark}
Let us give an example which shows that in the case where  assumption \eqref{robis} is violated, that is when $\gamma$ takes values $<-1$ with positive measure, then, even if the terminal condition is nonnegative, the solution $Y$ of the linear BSDE with default jump may take strictly negative values. Hence, in this case, the comparison theorem does not hold.
Suppose that the process $\lambda$ is bounded. Let $g$ be a $\lambda$-linear driver of the particular form
\begin{equation}
g(t,\omega,k)=\gamma k \lambda_t(\omega),
\end{equation} 
where  $\gamma$ is a real constant (this corresponds to the driver of the $\lambda$-linear BSDE \eqref{lin} with $\delta_s= \beta_s=\varphi_s=0$ and 
$\gamma_s= \gamma$).
At  terminal time $T$, the associated {\rm adjoint} process $\Gamma_{0,s}$ satisfies (see \eqref{eq4b} and Remark \ref{moinsun}) : 
  \begin{equation}\label{eqq}
  \Gamma_{0,T} = \exp\{- \int_0^T  \gamma \lambda_rdr \}(1 + \gamma {\bf 1}_{ \{ T \geq \vartheta  \}})=  \exp\{- \int_0^T  \gamma \lambda_rdr \}(1 + \gamma N_T),
  \end{equation}
where the second equality follows from  the definition of the default jump process $N$.\\
Let $Y$ be the solution of the BSDE associated with driver $g$ and terminal condition 
$$\xi:=N_T.$$
 The representation property of linear BSDEs with default jump (see \eqref{eqqun}) gives $$Y_0=\mathbb{E}[\Gamma_{0,T}  \xi]=\mathbb{E}[\Gamma_{0,T}  N_T ].$$
Hence, by $\eqref{eqq}$, we get 
\begin{align}\label{eqrefff}
Y_0=\mathbb{E}[\Gamma_{0,T}N_T]=\mathbb{E}[e^{-\gamma\int_0^T \lambda_sds}(1+\gamma N_T) N_T]=(1+\gamma)\mathbb{E}[e^{-\gamma\int_0^T \lambda_sds}N_T],
\end{align}
where for the last equality we have used the fact that $N_T=N_T^2$. \\
Equation \eqref{eqrefff}  shows that when $\gamma<-1$, we have $Y_0<0$ although $\xi \geq 0$ a.s.

This example also gives a counter-example for   the strict comparison theorem by taking $\gamma=-1$.  Indeed,  in this case, the relation $\eqref{eqrefff}$ at time $0$ yields that $Y_0=0$. Now,  we have 
\begin{align}
\mathbb{E}[\xi]= \mathbb{E}[N_T]=1-P(\vartheta > T). 
\end{align}
Hence, under the additional assumption $P(\vartheta>T)<1$, we get $\mathbb{E}[\xi]>0$, which implies that 
 $P(\xi>0)>0$, even though $Y_0=0$.
 
\end{remark}


\begin{proposition}[Comparison theorems for BSDEs with ``generalized driver"]\label{comparisongeneral} Let $\xi_1$ and $\xi_2$ $\in$ $L^2({\cal G}_T)$.
Let $g_1$ and $g_2$ be two $\lambda$-{\em admissible} drivers.
 Let $D^1$ and $D^2 $  be  finite variational RCLL adapted processes with square integrable total variation. Let $(Y^i, Z^i, K^i)$ be a solution in $S^{2} \times \H^{2} \times \H_{\lambda}^{2}$ of the BSDE 
\begin{equation*}
-dY^i_t  = \displaystyle   g_i (t, Y^i_t, Z^i_t, K^i_t) dt +dD^i_t- Z^i_t dW_t - K^i_t dM_t; \quad Y^i_T  = \xi_i.
\end{equation*}
(i) {\em (Comparison theorem)}. Assume that there exists a predictable process $(\gamma_t)$ satisfying \eqref{autre}
with \eqref{robis} and that 
\eqref{eq99} holds. Moreover, suppose that the process $\bar D:= D^1-D^2$  is non decreasing.
We then have $Y^1_t \geq Y^2_t$ for all $t \in [0,T]$.\\
(ii) {\em (Strict comparison theorem)}. Suppose moreover that $\gamma_t > -1$.
 If 
$Y_{t_0}=0$, for some $t_0 \in [0,T]$, then the inequalities in \eqref{eq99} are equalities and $D^1-D^2$ is constant on the time interval $[t_0,T]$.
\end{proposition}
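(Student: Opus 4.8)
The plan is to run the proof of Theorem~\ref{comparisonBSDE} essentially unchanged, the only genuinely new element being the singular term carried by $\bar D := D^1 - D^2$. First I would set $\bar Y_s := Y^1_s - Y^2_s$, $\bar Z_s := Z^1_s - Z^2_s$, $\bar K_s := K^1_s - K^2_s$, so that $\bar Y$ solves
\begin{equation*}
-d\bar Y_s = h_s\,ds + d\bar D_s - \bar Z_s\,dW_s - \bar K_s\,dM_s; \quad \bar Y_T = \xi_1 - \xi_2,
\end{equation*}
where $h_s := g_1(s,Y^1_s,Z^1_s,K^1_s) - g_2(s,Y^2_s,Z^2_s,K^2_s)$. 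Introducing the bounded predictable coefficients $\delta_s, \beta_s$ by the same incremental quotients as in the proof of Theorem~\ref{comparisonBSDE} and using assumption \eqref{autre}, I would obtain the identical pointwise lower bound
\begin{equation*}
h_s \geq \delta_s \bar Y_s + \beta_s \bar Z_s + \gamma_s \bar K_s \lambda_s + \varphi_s, \quad ds\otimes dP\text{-a.s.},
\end{equation*}
with $\varphi_s := g_1(s,Y^2_{s^-},Z^2_s,K^2_s) - g_2(s,Y^2_{s^-},Z^2_s,K^2_s) \geq 0$ by \eqref{eq99}.

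Next I would fix $t \in [0,T]$ and let $(\Gamma_{t,s})_{s \in [t,T]}$ be the adjoint process attached to $\delta,\beta,\gamma$ via \eqref{eq4b}; by Remark~\ref{moinsun}, $\Gamma_{t,\cdot} \in S^2$ and, since $\gamma_s \geq -1$ by \eqref{robis}, $\Gamma_{t,s} \geq 0$ a.s. Applying the It\^o product formula to $\bar Y_s \Gamma_{t,s}$ and proceeding exactly as in Theorem~\ref{comparisonBSDE}, the new feature is the finite-variation contribution, which---once the covariation between $\bar D$ and the default jump of $\Gamma_{t,\cdot}$ is absorbed as in Proposition~\ref{representationgeneral}---takes the clean form $\Gamma_{t,s}\,d\bar D_s$. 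Combining with the linearized inequality and $\Gamma_{t,\cdot}\geq 0$, this yields $-d(\bar Y_s \Gamma_{t,s}) \geq \Gamma_{t,s}\varphi_s\,ds + \Gamma_{t,s}\,d\bar D_s - dm_s$ for a true martingale $m$; integrating on $[t,T]$ and conditioning on ${\cal G}_t$ gives, in the spirit of \eqref{eqgeneral},
\begin{equation*}
\bar Y_t \geq \mathbb{E}\Big[ \Gamma_{t,T}(\xi_1 - \xi_2) + \int_t^T \Gamma_{t,s}\,\varphi_s\,ds + \int_t^T \Gamma_{t,s}\,d\bar D_s \mid {\cal G}_t \Big].
\end{equation*}
Since $\xi_1 \geq \xi_2$, $\varphi_s \geq 0$, $\Gamma_{t,\cdot}\geq 0$, and $\bar D$ is non-decreasing (so $d\bar D$ is a non-negative measure and the last integral is non-negative), the right-hand side is non-negative, proving $\bar Y_t \geq 0$ and hence (i).

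For (ii), the strict assumption $\gamma_s > -1$ upgrades Remark~\ref{moinsun} to $\Gamma_{t,s} > 0$ a.s.\ for all $s$. If $\bar Y_{t_0} = 0$, then the displayed inequality at $t = t_0$ forces the conditional expectation of a sum of three non-negative terms to vanish, hence each vanishes a.s.: strict positivity of $\Gamma_{t_0,T}$ gives $\xi_1 = \xi_2$ a.s., strict positivity of $\Gamma_{t_0,\cdot}$ forces $\varphi_s = 0$ $ds\otimes dP$-a.s.\ (equality in the second relation of \eqref{eq99}), and $\int_{t_0}^T \Gamma_{t_0,s}\,d\bar D_s = 0$ together with $\Gamma_{t_0,\cdot} > 0$ and the monotonicity of $\bar D$ forces $\bar D$ to be constant on $[t_0,T]$.

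I expect the main obstacle to be the It\^o product expansion in the presence of the jump of $\Gamma_{t,\cdot}$ at the default time $\vartheta$ combined with a possible jump of $\bar D$ there: the covariation term $\Delta\bar D_\vartheta\,\Delta\Gamma_{t,\vartheta}$ is precisely what converts the Stieltjes integral against the left limit $\Gamma_{t,s^-}$ into one against the right-continuous $\Gamma_{t,s}$, recovering the clean term $\int_t^T\Gamma_{t,s}\,d\bar D_s$ of Proposition~\ref{representationgeneral}; only after this identification do the non-negativity of $\Gamma_{t,\cdot}$ and the monotonicity of $\bar D$ deliver the sign needed to close both parts, everything else being a verbatim transcription of Theorem~\ref{comparisonBSDE}.
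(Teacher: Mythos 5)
Your proposal is correct and follows essentially the same route as the paper: linearize as in Theorem \ref{comparisonBSDE}, apply the It\^o product formula with the adjoint process $\Gamma_{t,\cdot}$ (the jump covariation indeed turning $\Gamma_{t,s^-}\,d\bar D_s$ into $\Gamma_{t,s}\,d\bar D_s$ as in Proposition \ref{representationgeneral}), obtain $\bar Y_t \geq \mathbb{E}\bigl[\Gamma_{t,T}(\xi_1-\xi_2) + \int_t^T \Gamma_{t,s}(\varphi_s\,ds + d\bar D_s) \mid {\cal G}_t\bigr]$, and conclude (i) by nonnegativity and (ii) by strict positivity of $\Gamma$ forcing each nonnegative term to vanish. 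The paper's treatment of the dividend term in (ii) is phrased through the auxiliary process $\tilde D_t = \int_{t_0}^t \Gamma_{t_0,s}\,d\bar D_s$, but this is the same argument you give directly.
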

\dproof
Using the same arguments and notation as  above, we obtain:
\begin{equation*}
  \bar{Y}_t \,\geq \,{\mathbb E}\, [\, \Gamma_{t,T} \,(\xi_1 - \xi_2) + \int_t^T 
  \Gamma_{t,s}\, (\varphi_s ds  + d\bar D_s) \mid {\cal G}_t ],\;\; 0 \leq t \leq T,\quad {\rm a.s.}\,
\end{equation*}
Hence, $\bar{Y}_t  \geq 0$ a.s.

(ii) Suppose moreover that $Y_{t_0}=0$ a.s. and that  $\gamma_t > -1$. Since $\gamma_t > -1$, we have $\Gamma_{t,T}>0$. We thus get $\xi_1=\xi_2$ a.s. and $\varphi_t=0$, $t \in [t_0,T]$ $dt \otimes dP$-a.s. Set $\Tilde{D}_t:= \int_{t_0,t} \Gamma_{t_0,s}d \bar{D}_s,$  for each $t \in [t_0,T]$. By assumption, $\Tilde{D}_T \geq 0$ a.s. and $\mathbb{E}[\Tilde{D}_T \mid {\cal G}_{t_0}]=0$ a.s. Hence $\Tilde{D}_T = 0$ a.s. Now, since 
$\Gamma_{t_0,s}>0$, $s\geq t_0$ a.s.\,, we can write $\bar{D}_T-\bar D_{t_0}= \int_{t_0,T} \Gamma^{-1}_{t_0,s}d \Tilde{D}_s$. We thus get $\bar D_T=\bar D_{t_0}$ a.s. 
\fproof

\noindent

 \section{Nonlinear pricing in a financial market with default}\label{sec3}
 \subsection{Financial market with defaultable risky asset}
  We consider a financial market which consists of one risk-free asset, whose price process $S^0$ satisfies $dS_t^{0}=S_t^{0} r_tdt$, and two risky assets with price processes $S^{1},S^{2}$ evolving according to the equations:
 \begin{eqnarray*}
dS_t^{1}&=&S_t^{1}[\mu_t^1dt +  \sigma^1_tdW_t]\\
dS_t^{2}&=&S_{t^-}^{2} [\mu^2_tdt+\sigma^2_tdW_t-dM_t],
\end{eqnarray*}
where the process $(M_t)$ is given by \eqref{M}.
Note that the second risky asset is defaultable with total default. We have $S_t^2=0$, $t \geq \vartheta$ a.s.
 
All the processes $\sigma^1,\sigma^2,$ $r, \mu^1,\mu^2$ are 
predictable (that is  ${\cal P}$-measurable). We set  $\sigma=(\sigma^1,\sigma^2)'$. 
We suppose that
$\sigma^1, \sigma^2 > 0$,
and  $r$, $\sigma^1,\sigma^2,$ 
${(\sigma^1)}^{-1}$, 
${(\sigma^2)}^{-1}$ are bounded. 

Let us  consider an investor who can invest in the three tradable assets. At time $0$, he invests the amount $x \geq 0$ in the three assets.

For $i=1,2$, we denote by $\varphi_t^i$ the amount invested in the $i^{\textit{th}}$ risky asset. Since after time $\vartheta$, the investor cannot invest his wealth in the defaultable 
asset (since its price is equal to $0$), we have $\varphi_t^2=0$ for each $t \geq \vartheta$.
%
A process $\varphi= (\varphi^1, \varphi^2)'$ belonging to ${\mathbb H}^2 \times  {\mathbb H}^2_{\lambda}$ is called a {\em risky assets stategy}.

Let $C_t$ be the cumulated cash amount which has been withdrawn from the market portfolio between time $0$ and time $t$. The process $C$ belongs to ${\cal A}^2$, that is, $C$ is an RCLL adapted non decreasing process satisfying $C_0=0$ and $E[C_T^2] < + \infty$.

The value of the associated portfolio (or {\em wealth}) at time $t$ is denoted  by $V^{x, \varphi, C}_t$. 
The amount invested in the non risky asset at time $t$ is then given by $V^{x, \varphi, C}_t - (\varphi_t^1+ \varphi_t^2)$.
\subsection{Pricing of European options with dividends in a perfect market model}
In this section, we suppose that the market model is perfect. In this case, by the self financing condition, the wealth process $V^{x, \varphi, C}$ (simply denoted by $V$) follows the dynamics:
\begin{align*}\label{portfolio}
dV_t & = r_t V_t+\varphi_t^1 (\mu^1_t - r_t)+\varphi_t^2(\mu^2_t - r_t)) dt - dC_t+
(\varphi_t^1 \sigma^1_t + \varphi_t^2 \sigma^2_t) dW_t - \varphi_t^2  dM_t\\
 & = \left(r_t V_t+(\varphi_t^1\sigma_t^1+\varphi_t^2\sigma_t^2) \theta_t^1- \varphi_t^2 \theta_t^2 \lambda_t  \right) dt - dC_t+ 
\varphi_t ' \sigma_t dW_t - \varphi_t^2  dM_t,
\end{align*}
where $\theta_t^1:=\dfrac{\mu_t^1-r_t}{\sigma_t^1}$,
$\theta_t^2:= - \dfrac{\mu_t^2-\sigma_t^2 \theta_t^1-r_t}{\lambda_t  }\,{\bf 1}_{\{t \leq \vartheta \} }$.\\
Loosely speaking, $dC_t$ represents the amount withdrawn from the portfolio during the time period 
$[t, t + dt]$.

Suppose that the processes $\theta^1$ and $\theta^2 \sqrt{\lambda}$ are bounded.



Let $T>0$. Let $\xi$ be a $\mathcal{G}_T$-measurable random variable belonging to  ${L}^2$, and let 
$D$ be a non decreasing process belonging to ${\cal A}^2$.
We consider a European option with maturity $T$, payoff $\xi$ and cumulative dividend process $D$. For each $t\in [0,T]$, 
$dD_t$ represents the dividend amount paid to the owner of the option between time $t$ and time $t+dt$.

The aim is to price this contingent claim. Let us consider a seller who wants to sell the option at time $0$. With the amount he receives at time $0$ from the buyer, he wants to be able to construct a portfolio which allows him to pay to the buyer the amount $\xi$ at time $T$ and the intermediate dividends.

By Proposition \ref{existencegeneral}, there exists an unique process $(X, Z, K) \in \mathcal{S}^2 \times {\mathbb H}^2 \times  {\mathbb H}^2_{\lambda}$ solution of the following $\lambda$-linear BSDE:
\begin{equation}\label{portfolio}
- dX_t = \displaystyle -  (r_t X_t+Z_t\theta_t^1+K_t  \theta_t^2 \lambda_t) dt + dD_t-  Z_t dW_t - K_t  dM_t\,; \quad
X_T=\xi.
\end{equation}
Note that the driver of this BSDE is given  for each $(\omega, t,y,z,k)$ by
\begin{equation}\label{vert}
g(\omega, t,y,z,k) = - r_t(\omega) y - z \theta^1_t (\omega)-  \theta^2_t (\omega)\lambda_t (\omega)\, k.
\end{equation}
Since by assumption, the coefficients $r, \sigma^2, \theta^1$, $\theta^2 \sqrt{\lambda}$  are predictable and bounded, it follows that $g$ is a $\lambda$-{\em linear} driver (see Definition \ref{deflinear}).
The solution $(X, Z, K)$ corresponds to the replicating portfolio. More precisely, 
the hedging risky assets stategy  $\varphi$ is such that 
\begin{equation} \label{st}
 {\varphi_t}' \sigma_t = Z_t \;\; ; \;\;
 - \varphi_t^2 = K_t,
\end{equation}
where ${\varphi_t}' \sigma_t= {\varphi ^1_t} \sigma^1_t + {\varphi^2_t} \sigma^2_t$.
Note that this defines a change of variables $\Phi$ defined by:\\
$\Phi:{\mathbb H}^2 \times  {\mathbb H}^2_{\lambda} \rightarrow {\mathbb H}^2 \times  {\mathbb H}^2_{\lambda}; 
(Z, K) \mapsto \Phi (Z, K):= \varphi$, where $\varphi= (\varphi^1, \varphi^2)$ is given by 
\eqref{st}, 
which is equivalent to 
\begin{equation} \label{stbis}
 \varphi_t^{2} = - {K_t} \;\; ; \;\; 
\varphi_t^{1} = \frac{Z_t -  \varphi_t^{2}  \sigma^2_t }{\sigma^1_t}= 
 \frac{Z_t +   \sigma^2_t  K_t}{\sigma^1_t}.
\end{equation}
The process $D$ corresponds to the cumulated cash withdrawal.
The process $X$ coincides with $V^{X_0, \varphi, D}$, the value of the  portfolio 
associated with initial wealth $x=X_0$, portfolio strategy $\varphi$ and cumulated (dividend) cash withdrawal $D$. 
From the seller's point of view, this portfolio is a hedging portfolio since, by investing the initial amount $X_0$ in the reference assets along the strategy $\varphi$, it allows him to pay the amount $\xi$ to the buyer at time $T$ and the intermediate dividends. 
We derive that $X_0$ is the initial price of the option, called {\em hedging price}, and denoted by 
$X_0^D(\xi)$. Similarly, for each time $t \in [0,T]$, $X_t$ is the {\em hedging price} at time $t$  of the option, and is denoted by 
$X_t^D(\xi)$. 

Since the driver $g$ given by \eqref{vert}  is $\lambda$-{\em linear}, 
the representation property of the solution of a $\lambda$-{\em linear} BSDE (see Theorem \ref{linear}) yields  
\begin{equation}\label{free-arbitrage price}
X^D_t(\xi)=\mathbb{E}[e^{-\int_t ^T r_s ds} \zeta_{t,T}\xi + \int_t^T e^{-\int_t ^s r_u du}  \zeta_{t,s} dD_s\,|\,{\cal G}_t],
\end{equation}
 where 
$\zeta$ satisfies 
\begin{equation}\label{zeta}
d\zeta_{t,s}= \zeta_{t,s^-} [-\theta^1_s dW_s - \theta^2_s  dM_s]; \quad \zeta_{t,t}=1.
\end{equation}
 This defines a {\em linear} price system $X$: $(\xi,D) \mapsto X^D(\xi)$.
Suppose now that $\theta^2_t < 1$, $0 \leq t \leq \vartheta\,$ $\,dt \otimes dP$-a.s.\,
Moroever, by Proposition \ref{roro}, the process $\zeta_{0,.}$ is a square integrable positive martingale. 
By classical results, 
the probability measure with density $\zeta_{0,T}$ on ${\cal G}_T$ is the unique 
 {\em martingale probability measure}, and
$X$ corresponds to the classical
 free-arbitrage price system (see  e.g. Proposition 7.9.11 in \cite{JYC}).

\subsection{Nonlinear pricing of European options with dividends in an imperfect market  with default} 
From 
 now on,   we assume that  there are  imperfections in the market which are taken into account via 
the {\em nonlinearity} of the
dynamics of the wealth. More precisely, 
we suppose that  the {\em wealth} process  $V^{x, \varphi,C}_t$ (or simply $V_t$)
associated with an initial wealth $x$, a strategy $\varphi=(\varphi^1, \varphi^2)$ in ${\mathbb H}^2 \times  {\mathbb H}^2_{\lambda}$ and a cumulated withdrawal process $C$ satisfies  the following dynamics:
\begin{equation}\label{riche}
-dV_t= g(t,V_t, {\varphi_t}' \sigma_t , - \varphi_t^{2} ) dt - {\varphi_t}' \sigma_t dW_t+dC_t +\varphi_t^{2} dM_t; \;  V_0=x,
\end{equation}
where $g$ is a  nonlinear $\lambda$-{\em admissible}
driver (see Definition \ref{defd}). 
Equivalently, setting $Z_t= {\varphi_t}' \sigma_t$ and
  $K_t= -  \varphi_t^2 $,
 \begin{equation}\label{wea}
-dV_t= g(t,V_t, Z_t,K_t ) dt -  Z_t dW_t+dC_t - K_t dM_t ; \;  V_0=x.
\end{equation}
Note that in the special case of a perfect market, $g$ is given by \eqref{vert}.

Let us consider a European option with maturity $T$, terminal payoff  $\xi \in {L}^2({\cal G_T})$ and dividend process $D \in \mathcal{A}^2$ in this market model. Let $(X^D(T, \xi), Z^D(T, \xi), K^D(T, \xi)),$ also denoted by $(X,Z,K)$, be 
the solution of BSDE associated with terminal time $T$, ``generalized driver" $g(\cdot)dt+dD_t$ and terminal condition $\xi$, that is satisfying   

\begin{equation*}
-dX_t = g(t,X_t, Z_t,K_t ) dt + dD_t -  Z_t dW_t - K_t dM_t; \quad
X_T=\xi.
\end{equation*}

The process $X=X^D(T, \xi)$ is equal to  the wealth process associated with initial value $x= X_0$,
strategy $\varphi $ $= \Phi  (Z, K)$ (see \eqref{stbis}) and cumulated amount $D$ of  cash withdrawals  that is
 $X= V^{X_0, \varphi,D}$.
Its initial value $X_0=X^D_0(T, \xi)$  is thus a sensible price 
 (at time $0$)  of the option for the seller since this amount allows him/her to construct a trading 
strategy  $\varphi $, called \textit{hedging} strategy,  such that the value of the associated portfolio is equal to $\xi$ at time $T$.  Moreover, the cash withdrawals perfectly replicate the dividends of the option. Similarly, $X_t=X^D_t(T, \xi)$  is a sensible price for the seller
 at time $t$. 

For each maturity $S\in [0,T]$ and for each pair ``payoff-dividend" $(\xi, D) \in 
 {L}^2({\cal G_S}) \times {\cal A}^2,$ we define the 
{\em $g$-value process}   by 
${\cal E}_{t,S}^{^{g,D}} (\xi):= X^D_t(S, \xi), $ $t \in [0,S]$. Note that ${\cal E}_{t,S}^{^{g,D}} (\xi)$ can be defined 
on the whole interval $[0,T]$ by setting ${\cal E}_{t,S}^{^{g,D}} (\xi):= {\cal E}_{t,T}^{^{g^S,D^S}} (\xi)$ for $t \geq S$, where 
$g^S(t,.):= g (t,.) {\bf 1}_{t \leq S}$ and $D_t^S := D_{t \wedge S}$. 

This  leads  to a {\em nonlinear pricing} system 

$${\cal E}^{^{g,\cdot}}: (S,\xi,D) \mapsto {\cal E}^{^{g,D}}_{\cdot,S}(\xi).$$ 

When there are no dividends, it reduces to the {\em nonlinear pricing} system ${\cal E}^{^{g,0}}$ (usually denoted by ${\cal E}^{^{g}}$), first introduced by El Karoui-Quenez (\cite{EQ96})
 in a Brownian framework 

We now give some  properties on this {\em nonlinear pricing} system $\mathcal{E}^{^{g,\cdot}}$ which generalize those given in \cite{EQ96} to the case with a default jump and dividends.

\begin{itemize}
\item[$\bullet$] \textbf{Consistency.} By the flow property for BSDEs, $\mathcal{E}^{^{g,\cdot}}$ is \textit{consistent}. More precisely, let $S'\in [0,T]$, $\xi \in L^2(\mathcal{G}_T)$, $D\in \mathcal{A}^2,$ and let $S$ be a stopping time smaller than $S'$.  Then for each time $t$ smaller than $S$, the $g$-value of the option  associated with payoff $\xi$, (cumulated) dividend process $D$ and maturity $S'$ coincides with the $g$-value of the option associated with maturity $S$, payoff $\mathcal{E}_{S,T}^{^{g,D}}(\xi)$ and dividend process $D$, that is
$$\mathcal{E}_{t,S'}^{^{g,D}}(\xi)=\mathcal{E}_{t,S}^{^{g,D}}(\mathcal{E}_{S,S'}^{^{g,D}}(\xi)) \text{ a.s. }$$
%
%
\paragraph{$\bullet$  Zero-one law.}

If $g(t,0,0,0) = 0$ \footnote{Note that when the market is perfect, $g$ is given by \eqref{vert} and thus satisfies $g(t,0,0,0) = 0$.},  then the price of the European option with null payoff and no dividends is equal to $0$. More precisely, $\mathcal{E}^{^{g,\cdot}}$ satisfies  the {\em  Zero-one law} property: 
for all maturity $S \in [0,T]$, for all payoff $\xi \in L^2(\mathcal{G}_S),$ and cumulated dividend process $D \in \mathcal{A}^2$,\\
 $\mathcal{E}_{t,S}^{^{g,D^A}}({\bf 1}_A \xi)=\
   {\bf 1}_A \mathcal{E}_{t,S}^{^{g,D }}( \xi)$ $a.s$ for $t \leq S$, $A \in {\cal G}_t$, and $\xi$ $\in$ $L^2({\cal G}_S)$, where $D^A$ is the process defined  by $D^A_s:= (D_s -D_t){\bf 1}_A
   {\bf 1}_{s\geq t}$.

Because of the presence of the default jump, the {\em nonlinear pricing} system $\mathcal{E}^{^{g,\cdot}}$ is not necessarily monotone with respect to $(\xi,D)$.
We introduce the following Assumption. 

\begin{assumption}\label{Royer} 
Assume that there exists a map \begin{equation*}
 \gamma:  [0,T]  \times \Omega\times \R^4   \rightarrow  \R \,; \, (\omega, t, y,z, k_1, k_2) \mapsto 
\gamma_t^{y,z,k_1,k_2}(\omega)
\end{equation*}
 ${\cal P } \otimes {\cal B}(\R^4) $-measurable, satisfying $ dP\otimes dt $-a.s.\,, for each $(y,z, k_1, k_2)$ $\in$ $\R^4$,
  \begin{equation*}
(\gamma_t^{y,z,k_1,k_2}
 \sqrt{ \lambda_t}) \;\text { bounded} \;\;   \text{ and } \quad\gamma_t^{y,z,k_1,k_2} \geq -1,
\end{equation*}
 and
\begin{equation} \label{critere}
g( t,y,z, k_1)- g(t,y,z, k_2) \geq  \gamma_t^{y,z, k_1,k_2} (k_1 - k_2 )  \lambda_t,
\end{equation} 

\end{assumption}
Recall that $\lambda$ vanishes after $\vartheta$ and $g(t,\cdot)$ does not depend on $k$ 
on $\{t >\vartheta\}$. Hence, the inequality \eqref{critere} is always satisfied on $\{t >\vartheta\}$.\\
Note that the above assumption holds e.g. if $g(t,\cdot)$ is non decreasing with respect to $k$, 
or if 
 $g$ is ${\cal C}^1$ in $k$  with $ \partial_k g(t, \cdot) \geq - \lambda_t$ on $\{t \leq \vartheta\}$.\\
 In the case of a perfect
  market, it is satisfied  when
  $\theta^2_t\leq 1$.


Before giving some additional properties (which hold under this Assumption), we introduce the following partial order relation, defined for each fixed time $S \in [0,T]$, on the set of pairs "payoff-dividends" by: for each $(\xi^1,D^1), (\xi^2,D^2)  \in L^2(\mathcal{G}_S) \times \mathcal{A}^2$ by
\begin{equation*}
(\xi^1,D^1)  \succ (\xi^2,D^2) \,\,\quad  \text{ if } \quad  \xi^1 \geq \xi^2\,\, {\rm a.s.} \text{ and } D^1-D^2 \text{   is non decreasing.} 
\end{equation*}
Loosely speaking, the non decreasing property of $D^1-D^2$ corresponds to the fact that the instantaneous dividends paid between times $s$ and $s+ds$ corresponding to $D^1$ are greater or equal to the ones  corresponding to $D^2$, that is $dD^1_s \geq dD^2_s$.

\item[$\bullet$] \textbf{Monotonicity.} Under Assumption \ref{Royer}, the nonlinear pricing system ${\cal E}^{^{g,\cdot}}$ is non decreasing with respect to the payoff and the dividend. More precisely, for all maturity $S \in [0,T]$, for all payoffs $\xi_1, \xi_2 \in L^2(\mathcal{G}_S),$ and cumulated dividend processes $D^1, D^2 \in \mathcal{A}^2,$ the following property holds:

 If $(\xi^1,D^1)  \succ (\xi^2,D^2)$, then we have ${\cal E}_{t,S}^{g,D^1}(\xi_1) \geq {\cal E}_{t,S}^{g,D^2}(\xi_2)$, \, $t \in [0,S]$ a.s.

This property follows from the comparison theorem for BSDEs with ``generalized drivers" (Proposition \ref{comparisongeneral} (i)) applied
  to $g^1=g^2=g$ and  $\xi^1$, $\xi^2$, $D^1$, $D^2$ (Indeed, in this case, by Assumption \ref{Royer}, Assumption \eqref{autre} holds with $\gamma_t:= \gamma_t^{Y^2_{t^-},Z_t^2,K^1_t,K^2_t}$).
  
 Using this comparison theorem, we also derive the following property:
 
\item[$\bullet$] \textbf{Convexity.} Under Assumption \ref{Royer}, if $g$ is convex with respect to $(y,z,k)$, then the nonlinear pricing system ${\cal E}^{^{g,D}}$ is convex, that is, for any $\alpha \in [0,1]$, $S \in [0,T]$, $\xi_1, \xi_2 \in L^2(\mathcal{G}_S), D^1, D^2 \in \mathcal{A}^2$
$$
\mathcal{E}_{t,S}^{^{g,\alpha D^1+(1-\alpha) D^2}}(\alpha \xi_1+(1-\alpha) \xi_2) \leq \alpha \mathcal{E}_{t,S}^{^{g,D^1}}(\xi_1)+(1-\alpha )\mathcal{E}_{t,S}^{^{g,D^2}}(\xi_2), \,\,\, \, \text{ for all } t \in [0,S].
$$

\item[$\bullet$] \textbf{Nonnegativity.} Under Assumption \ref{Royer}, when $g(t,0,0,0) \geq 0$, 
the nonlinear pricing system  ${\cal E}^{^{g,\cdot}}$ is nonnegative, that is, for each $S\in [0,T]$, for all non negative $\xi \in {L}^2({\cal G_S})$ and all $D \in \mathcal{A}^2$, 
we have ${\cal E}^{^{g,D}}_{\cdot, S} (\xi)\geq 0$ a.s.\, 
\\


Moreover, under the additional assumption $\gamma_t^{y,z,k_1,k_2}>-1$ in Assumption \ref{Royer}, 
using the strict comparison theorem (Proposition \ref{comparisongeneral} (ii)), we derive the following no arbitrage property: 
\item[$\bullet$] \textbf{No arbitrage.} Under Assumption \ref{Royer} with  $\gamma_t^{y,z,k_1,k_2}>-1$, the nonlinear pricing system ${\cal E}^{^{g,\cdot}}$ satisfies the \textit{no arbitrage} property:\\
for all maturity $S \in [0,T]$, and for all payoffs $\xi^1, \xi^2 \in L^2(\mathcal{G}_S)$, and cumulated dividend processes $D^1, D^2 \in \mathcal{A}^2$, $t_0 \in [0,S],$ and $A \in \mathcal{G}_{t_0}$,

Suppose that $(\xi^1,D^1)  \succ (\xi^2,D^2)$, and ${\cal E}_{t_0, S}^{^{g,D^1}}(\xi_1)={\cal E}_{t_0, S}^{^{g,D^2}}(\xi_2)$ a.s. on $A \in \mathcal{G}_{t_0}$.\\
Then, $\xi_1=\xi_2$ a.s. on $A$ and $(D_t^1-D_t^2)_{t_0 \leq t \leq S}$ is  a.s. constant on $A$, that is $D^1_S - D^1_{t_0} = D^2_S - D^2_{t_0}$ a.s. on $A$. In other words, the  payoffs and the instantaneous dividends paid between $t_0$ and $S$ are equal a.s. on $A$.


%

\end{itemize} 
The {\bf No arbitrage} property also ensures that when  $\gamma_t^{y,z,k_1,k_2}>-1$, the nonlinear pricing system $\mathcal{E}^g$ is strictly monotone. 
Note that when the market is perfect, the condition $\gamma_t^{y,z,k_1,k_2}>-1$ is satisfied when $\theta^2_t <1$.

 \begin{remark}
Several authors have studied dynamic risk measures defined as 
the solutions of BSDEs 
(see e.g. \cite{Peng2004, BEK, QS1}). In our framework with a default jump, given a $\lambda$-{\em admissible} driver, one can  define a {\em dynamic measure of risk} $\rho^{g}$ as follows: for each 
$S \in [0,T]$ and $\xi \in L^2({\cal G}_S)$, we set
$$\rho^{g}_\cdot (\xi, S) =  -\mathcal{E}^{g}_{\cdot,S} (\xi),$$
where $\mathcal{E}^{g}_{\cdot,S} (\xi)$ denotes the solution of the BSDE associated with terminal condition $\xi$, terminal time $T$ and driver $g$.
Then, by the results of this section, the dynamic risk-measure $\rho^{g}$ satisfies analogous properties to the ones of the nonlinear pricing system $\mathcal{E}^{g}_{\cdot,S} = \mathcal{E}^{g,0}_{\cdot,S}$ (corresponding to the case with no dividends).

\end{remark}

We now introduce the definition of an $\mathcal{E}^{^{g,D}}$-supermartingale  which generalizes the classical notion of $\mathcal{E}^g$-supermartingale.

  \begin{definition}\label{defmart}
Let  $D \in \cal A^2$ and $Y \in \cal S^2$. The process $Y$ is said to be a $\mathcal{E}^{^{g,D}}$-supermartingale (resp. $\mathcal{E}^{^{g,D}}$-martingale)  if ${\cal E}_{\sigma ,\tau}^{^{g,D}}(Y_{\tau}) \leq Y_{\sigma}$ (resp. $= Y_{\sigma}$) a.s. on $\sigma \leq \tau$,  for all $ \sigma, \tau \in \mathcal{T}_0$. 
\end{definition}


\begin{proposition}  \label{rima}
For all $S\in [0,T]$, payoff $\xi \in {L}^2({\cal G_S})$ and dividend process $D \in {\cal A}^2$, the associated $g$-value process 
${\cal E}_{\cdot,S}^{^{g,D}} (\xi)$ is an $\mathcal{E}^{^{g,D}}$-martingale.

Moreover, for all $x \in \mathbb{R}$, portfolio strategy $\varphi$ $\in$ ${\mathbb H}^2\times 
{\mathbb H}^2_{\lambda}$ and cash withdrawal process $D \in {\cal A}^2$, the associated wealth process $V^{x, \varphi,D}$ is an $\mathcal{E}^{^{g,D}}$-martingale.
\end{proposition}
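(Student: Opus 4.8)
The plan is to deduce both statements directly from the flow (consistency) property of BSDEs with generalized driver, which was established earlier as the \textbf{Consistency} property of the pricing system $\mathcal{E}^{^{g,\cdot}}$. First I would recall that for $\sigma \leq \tau$ two stopping times in $\mathcal{T}_0$, the defining equation $\mathcal{E}_{\sigma,\tau}^{^{g,D}}(Y_\tau) = Y_\sigma$ is precisely the statement that, starting the BSDE \eqref{BSDEgeneral} at terminal time $\tau$ with terminal condition $Y_\tau$ and generalized driver $g(t,\cdot)dt + dD_t$, the solution evaluated at time $\sigma$ equals $Y_\sigma$. So the $\mathcal{E}^{^{g,D}}$-martingale property is nothing more than the assertion that the $g$-value process, restricted to the random interval $[\sigma,\tau]$, is itself the solution of the BSDE on that interval. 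This is exactly the flow property.

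For the first claim, fix $S \in [0,T]$, $\xi \in L^2(\mathcal{G}_S)$ and $D \in \mathcal{A}^2$, and set $Y_t := \mathcal{E}_{t,S}^{^{g,D}}(\xi) = X_t^D(S,\xi)$, so that $(Y,Z,K)$ solves \eqref{BSDEgeneral} on $[0,S]$ with terminal condition $\xi$ (and is extended to $[0,T]$ via $g^S$ and $D^S$ as in the definition of $\mathcal{E}^{^{g,\cdot}}$). Given $\sigma \leq \tau$ in $\mathcal{T}_0$, I would invoke uniqueness of the solution of the BSDE with generalized driver (Proposition \ref{existencegeneral}): on the interval $[\sigma,\tau]$, the process $(Y,Z,K)$ satisfies the same dynamics $-dY_t = g(t,Y_t,Z_t,K_t)\,dt + dD_t - Z_t\,dW_t - K_t\,dM_t$ with terminal value $Y_\tau$ at time $\tau$. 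By the uniqueness part of Proposition \ref{existencegeneral}, this restricted process coincides with the solution defining $\mathcal{E}_{\cdot,\tau}^{^{g,D}}(Y_\tau)$, whence $\mathcal{E}_{\sigma,\tau}^{^{g,D}}(Y_\tau) = Y_\sigma$ a.s. This is the $\mathcal{E}^{^{g,D}}$-martingale property, and the equality on $\{\sigma \leq \tau\}$ (rather than merely for deterministic times) follows from the consistency property already stated, which is expressed for stopping times.

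For the second claim, fix $x \in \mathbb{R}$, a strategy $\varphi \in \mathbb{H}^2 \times \mathbb{H}^2_\lambda$ and $D \in \mathcal{A}^2$, and let $V = V^{x,\varphi,D}$. By the wealth dynamics \eqref{riche}–\eqref{wea}, setting $Z_t := \varphi_t'\sigma_t$ and $K_t := -\varphi_t^2$, the triple $(V,Z,K)$ solves exactly the BSDE with generalized driver $g(t,\cdot)dt + dD_t$, the only difference from the first claim being the terminal value, which is now $V_S$ rather than a prescribed $\xi$. Thus $V = \mathcal{E}_{\cdot,S}^{^{g,D}}(V_S)$ for the relevant maturity, and the same flow argument gives $\mathcal{E}_{\sigma,\tau}^{^{g,D}}(V_\tau) = V_\sigma$ a.s.\ on $\{\sigma \leq \tau\}$. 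I expect the only delicate point to be bookkeeping rather than mathematics: one must check that the extension of $\mathcal{E}^{^{g,D}}$ beyond the maturity (via $g^S$ and $D^S$) is consistent with evaluating on random intervals $[\sigma,\tau]\subseteq[0,T]$, and that $(Z,K)$ lies in the correct spaces $\mathbb{H}^2 \times \mathbb{H}^2_\lambda$ so that uniqueness in Proposition \ref{existencegeneral} genuinely applies; both are immediate from the integrability assumptions on $\varphi$ and from the square-integrability of the total variation of $D$.
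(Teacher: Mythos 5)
Your proposal is correct and follows essentially the same route as the paper: the first assertion is exactly the consistency (flow) property of $\mathcal{E}^{^{g,D}}$, obtained from uniqueness in Proposition \ref{existencegeneral}, and the second follows by observing that $V^{x,\varphi,D}$, with $Z_t=\varphi_t'\sigma_t$ and $K_t=-\varphi_t^2$, is itself the solution of the BSDE with generalized driver $g(\cdot)dt+dD_t$ and terminal condition its own terminal value. Your version merely spells out the uniqueness and bookkeeping details that the paper's two-line proof leaves implicit.
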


\dproof The first assertion follows from the consistency property of $\mathcal{E}^{^{g,D}}$. The second one is obtained by noting that 
$V^{x, \varphi,D}$ is the solution of the 
BSDE with ``generalized driver" $g(\cdot)dt+dD_t$, terminal time $T$ and terminal condition $V_T^{x, \varphi,D}$.
\fproof

\paragraph{Example (Large investor seller)} 
When the seller is a large trader, his hedging portfolio may  affect the prices of the risky assets and the default probability. He may take into account these feedback effect in his market  model as follows. 

In order to simplify the presentation, we consider the case when the seller's strategy affects only the default intensity. 
We are given a family of probability measures parametrized by $V$ and $\varphi$.
More precisely, for each $V \in \mathcal{S}^2$ and  $\varphi \in  {\mathbb H}^2$, let $Q^{V, \varphi}$ be the probability measure equivalent to $P$, which admits $L^{V, \varphi }$ as density with respect to $P$, where $(L^{V, \varphi })$ is the solution of the following SDE:
$$
dL_t^{V, \varphi }=L_{t^-} \gamma(t,{V}_{t^-}, \varphi_t)dM_t; \quad L^{V,\varphi}_0=1.
$$
Here,
$\gamma: (\omega,t ,y, \varphi_1, \varphi_2) \mapsto \gamma  (\omega, t ,y, \varphi_1, \varphi_2) $ is  a  ${\cal P}\otimes {\cal B}({\bf R}^3)/\mathcal{B}({\bf R})$-measurable function defined on $ \Omega \times [0,T] \times  {\bf R}^2$ with $\gamma (t, \cdot)   >-1$, and such that
$(\gamma (t, \cdot)\sqrt{\lambda_t})$ is uniformly bounded. Note that by Proposition \ref{roro}, we have 
$L \in {\cal S}^2$.

By Girsanov's theorem, the process $W$ is a  $Q^{V, \varphi}$-Brownian motion and the process $M^{V, \varphi}$ defined as 
\begin{equation}\label{Girsanov}
M^{V, \varphi}_t:= N_t- \int_0^t \lambda_s (1+ \gamma(s,V_s, {\varphi}_s))ds= M_t - \int_0^t \lambda_s \gamma(s,V_s, {\varphi}_s)ds
\end{equation}
is a  $Q^{V, \varphi}$-martingale. 
 Hence, under $Q^{V, \varphi}$, the ${\mathbb G}$-default intensity process is equal to 
$\lambda_t (1+ \gamma(t,V_t, {\varphi}_t))$.
The process $\gamma(t,V_t, {\varphi}_t)$ represents the {\em impact of the seller's strategy on the default intensity}.

The dynamics of the wealth process associated with an initial wealth $x$ and  a risky assets stategy $\varphi$ 
satisfy 
\begin{equation}\label{vvv}
dV_t  = \left(r_t V_t+(\varphi_t^1\sigma_t^1+\varphi_t^2\sigma_t^2) \theta_t^1- \varphi_t^2 \theta_t^2 \lambda_t  \right)  dt - dC_t+ 
\varphi_t ' \sigma_t dW_t - \varphi_t^2  dM^{V, \varphi}_t,
\end{equation}
Let us show that this model can be seen as a particular case of the model described above associated with an appropriate map $\lambda$-admissible driver $g$. 
First, note that the dynamics of the wealth \eqref{vvv} can be written
\begin{equation*}
dV_t  = \left(r_t V_t+(\varphi_t^1\sigma_t^1+\varphi_t^2\sigma_t^2) \theta_t^1- \varphi_t^2 \theta_t^2 \lambda_t  +  \gamma(t,V_t, {\varphi}_t)  \lambda_t   \varphi_t^2 \right)dt - dC_t+ \varphi_t ' \sigma_t dW_t - \varphi_t^2  dM_t,
\end{equation*}
Equivalently, setting $Z_t= {\varphi_t}' \sigma_t$ and
  $K_t= -  \varphi_t^2 $,
 \begin{equation}\label{wea}
-dV_t= g(t,V_t, Z_t,K_t ) dt -  Z_t dW_t+dC_t - K_t dM_t,
\end{equation}
where 
$$g(t,y,z,k) =  -r_t y - z \theta^1_t -  \theta^2_t \lambda_t k + \gamma \left(t,y, (\sigma^1_t)^{-1}(z + { \sigma^2_t} k), -k \right)\lambda_t  k.
 $$
 We are thus led to the general model described above associated with this driver.

This model can be easily generalized to the case when the coefficients $\mu^1$, $\sigma^1$, $\mu^2$, $\sigma^2$ also depend on the hedging cost $V$ (equal to the price of the option) and  on the hedging strategy $\varphi^2$.
\footnote{The coefficients may also depend  on $\varphi= (\varphi^1, \varphi^2)$, but in this case, we have to assume that   the map  $\Psi:$ $(\omega, t,y,\varphi) \mapsto (z,k)$ with $z={\varphi}' \sigma_t(\omega,t,y,\varphi)$ and
 $k=- \varphi^2$ is one to one with respect to $\varphi$, and such that its inverse $\Psi^{-1}_{\varphi} $ is ${\cal P}\otimes {\cal B} ({\bf R}^3)$-measurable.  }

\appendix
\section{Appendix}

For $p \geq 2$, we introduce the spaces ${\cal S}^{p}$,  ${\mathbb H}^p$ and ${\mathbb H}^p_{\lambda}$ defined as follows.\\
Let ${\cal S}^{p}$ 
be the set of ${\mathbb G}$-adapted RCLL processes $\varphi$ such that $\mathbb{E}[\sup_{0\leq t \leq T} |\varphi_t | ^p] < +\infty$.\\
Let ${\mathbb H}^p$  be the set of ${\mathbb  G}$-predictable processes such that
 $
 \| Z\|_p^p:= \mathbb{E}\Big[(\int_0^T|Z_t|^2dt)^{p/2}\Big]<\infty \,.$\\
 Let ${\mathbb H}^p_{\lambda}$ be the set of ${\mathbb G}$-predictable processes such that
$\| U\|_{p,\lambda}^p:=\mathbb{E}\Big[(\int_0^T|U_t|^2\lambda_tdt)^{p/2}\Big]<\infty \,.$

%

\paragraph{BSDEs with a default jump in $L^p$}

\begin{proposition}
\label{p}
Let $p\geq 2$ and let $T >0$. Let  $g$ be a $\lambda$-admissible driver such that $g(t,0,0,0)$ $\in$ $\H^{p}$. Let $\xi \in {L}^p({\cal G_T})$. There exists a unique solution $(Y, Z, K)$  in 
$ \mathcal{S}^p \times {\mathbb H}^p \times  {\mathbb H}^p_{\lambda}$ of  the BSDE with default
\eqref{BSDE}. 
\end{proposition}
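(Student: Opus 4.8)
The plan is to reduce everything to the $L^2$ theory already built and then upgrade the integrability of the solution by an $L^p$ a priori estimate. Since the horizon is finite and the base space is a probability space, Jensen's inequality gives the inclusions $L^p({\cal G}_T)\subset L^2({\cal G}_T)$, ${\mathbb H}^p\subset{\mathbb H}^2$ and ${\mathbb H}^p_\lambda\subset{\mathbb H}^2_\lambda$; in particular $g(t,0,0,0)\in{\mathbb H}^p\subset{\mathbb H}^2$, so $g$ is an admissible driver for Proposition \ref{existence}. Applying that proposition with $\xi\in L^p\subset L^2$ produces a process $(Y,Z,K)\in{\cal S}^2\times{\mathbb H}^2\times{\mathbb H}^2_\lambda$ solving \eqref{BSDE}, and this is the only candidate: any solution in ${\cal S}^p\times{\mathbb H}^p\times{\mathbb H}^p_\lambda$ is \emph{a fortiori} an $L^2$-solution, so uniqueness in the $L^p$ classes follows from the uniqueness part of Proposition \ref{existence}. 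It thus remains to prove that this $(Y,Z,K)$ actually lies in ${\cal S}^p\times{\mathbb H}^p\times{\mathbb H}^p_\lambda$.

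The heart of the argument is an $L^p$ analogue of the a priori estimate of Proposition \ref{est}. I would apply It\^o's formula to $|Y_t|^p$, which is licit because $p\ge2$ makes $x\mapsto|x|^p$ of class ${\cal C}^2$ with $(|x|^p)''=p(p-1)|x|^{p-2}$; to justify the manipulations I would first stop at a localizing sequence of stopping times reducing the stochastic integrals to true martingales, then pass to the limit by Fatou and monotone convergence. The continuous part yields the drift $-p|Y_s|^{p-2}Y_s\,g(s,Y_s,Z_s,K_s)\,ds$ together with the nonnegative Hessian contribution $\tfrac12 p(p-1)|Y_s|^{p-2}(Z_s^2+K_s^2\lambda_s)\,ds$, while each jump of $N$ contributes the increment $|Y_{s^-}+K_s|^p-|Y_{s^-}|^p-p|Y_{s^-}|^{p-2}Y_{s^-}K_s$, which is nonnegative by convexity of $|\cdot|^p$ and whose compensator is controlled through \eqref{lip}. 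Using the Lipschitz bound \eqref{lip} (so that $|g(s,Y_s,Z_s,K_s)|\le C(|Y_s|+|Z_s|+\sqrt{\lambda_s}\,|K_s|)+|g(s,0,0,0)|$), Young's inequality to absorb the cross terms, then taking expectations and the supremum in $t$ estimated via the Burkholder--Davis--Gundy inequality, I expect to obtain, with $C_p$ depending only on $p$, $T$ and the $\lambda$-constant $C$,
\[
\|Y\|_{{\cal S}^p}^p+\|Z\|_p^p+\|K\|_{p,\lambda}^p\ \le\ C_p\,\EB\Big[|\xi|^p+\Big(\int_0^T|g(s,0,0,0)|\,ds\Big)^p\Big]<\infty.
\]

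The control of $\|Z\|_p$ and $\|K\|_{p,\lambda}$ is the delicate point, and I expect it to be the main obstacle. The It\^o expansion only delivers a bound on the weighted quantity $\EB[\int_0^T|Y_s|^{p-2}(Z_s^2+K_s^2\lambda_s)\,ds]$, whereas $\|Z\|_p^p$ and $\|K\|_{p,\lambda}^p$ involve $\EB[(\int_0^T(Z_s^2+K_s^2\lambda_s)\,ds)^{p/2}]$; passing from one to the other requires isolating the martingale part $\int_0^\cdot Z_s\,dW_s+\int_0^\cdot K_s\,dM_s$ of $Y$ from \eqref{BSDE}, applying Burkholder--Davis--Gundy, and using Young's inequality to absorb the weight $|Y|^{p-2}$ into $\|Y\|_{{\cal S}^p}^p$. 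This is precisely where the $L^p$ computation departs from the direct $L^2$ argument of Proposition \ref{est}, and the attendant moment bounds, together with an induction on $p$, can be carried out exactly as in the proof of Proposition A.1 in \cite{QS1}. Once these estimates are established for the solution $(Y,Z,K)$ furnished by Proposition \ref{existence}, the proof is complete.
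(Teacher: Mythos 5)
Your reduction of uniqueness to the $L^2$ theory is correct: an $L^p$ solution is in particular an $L^2$ solution, so uniqueness is inherited from Proposition \ref{existence}. The gap is in the existence/regularity half. You propose to take the $L^2$ solution $(Y,Z,K)$ and upgrade its integrability by applying It\^o's formula to $|Y_t|^p$, localizing, and ``passing to the limit by Fatou and monotone convergence''. This limit passage fails, because for a solution only known to lie in ${\cal S}^2\times{\mathbb H}^2\times{\mathbb H}^2_\lambda$ the right-hand side of the localized inequality contains terms that are not known to be finite. Concretely: (i) the terminal term $\mathbb{E}[|Y_{\tau_n}|^p]$ cannot be dominated --- even if $\tau_n$ is chosen so that $|Y_s|\le n$ for $s<\tau_n$, the process may jump at $\tau_n=\vartheta$ by $K_\vartheta$, and $\mathbb{E}[|K_\vartheta|^p {\bf 1}_{\vartheta\le T}]=\mathbb{E}[\int_0^T |K_s|^p\lambda_s\, ds]$ is exactly the kind of quantity you are trying to prove finite; moreover Fatou helps only on the left-hand side, whereas on the right you would need $\limsup_n \mathbb{E}[|Y_{\tau_n}|^p]\le\mathbb{E}[|\xi|^p]$, i.e.\ uniform integrability of $|Y_{\tau_n}|^p$, which is not available; (ii) the absorption steps (Young's inequality, and the BDG step you rightly single out as delicate, where $\varepsilon\,\mathbb{E}[(\int_0^T(Z_s^2+K_s^2\lambda_s)\,ds)^{p/2}]$ is moved to the left-hand side) are legitimate only if the absorbed quantity is already known to be finite --- one cannot subtract a possibly infinite term from both sides. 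So what you write is a valid \emph{a priori} estimate for solutions already known to lie in ${\cal S}^p\times{\mathbb H}^p\times{\mathbb H}^p_\lambda$; it does not show that the given $L^2$ solution belongs to these spaces.

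The paper's proof avoids this circularity by a different architecture, namely that of Proposition A.2 in \cite{QS1} (not A.1, which is the exponential-martingale lemma proved by induction on $p$ that you invoke) combined with the proof of Proposition \ref{existence}: first treat drivers independent of the solution, where $Y_t=\mathbb{E}[\xi+\int_t^T g_s\, ds\mid{\cal G}_t]$ lies in ${\cal S}^p$ by Doob's maximal inequality and $(Z,K)\in{\mathbb H}^p\times{\mathbb H}^p_\lambda$ follows from BDG applied to the representation martingale; then run the Picard/contraction fixed point directly in the Banach space ${\cal S}^p\times{\mathbb H}^p\times{\mathbb H}^p_\lambda$ with weighted norms, so that every iterate is in the $L^p$ spaces by construction and all estimates on differences are licit. (A rigorous repair of your bootstrap exists --- truncate the data, note that $Y$ is then bounded by \eqref{estimateY}, upgrade $Z,K$ by a BDG-type lemma, and pass to the limit via $L^p$ estimates on differences --- but this amounts to reproving $L^p$ existence, not to directly upgrading the $L^2$ solution as you propose.)
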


\begin{remark}\label{faible}
The above result still holds in the case  when there is a ${\mathbb G}$-martingale representation theorem 
with respect to $W$ and $M$,
 even if ${\mathbb G}$ is not generated by $W$ and $M$.
 \end{remark}

\dproof
The proof relies on the same arguments as in the proof of Proposition A.2 in \cite{QS1} together with the arguments used 
in the proof of Proposition \ref{existence}.
\fproof

\paragraph{BSDEs with a default jump and change of probability measure}

\quad

Let $(\beta_s)$ and $(\gamma_s)$ be  two real-valued ${\mathbb G}$-predictable processes such 
that 
$\int_0^T  (\beta_r ^2 + \gamma_r ^2  \lambda _r)\,dr$ is bounded.

 Let $(\zeta_s)$ be the process satisfying the forward SDE: 
$$\quad d \zeta_{s}  = \zeta_{s^-} (\beta_s d W_s +\gamma_s dM_s),$$ with $\zeta_0 =1$. By Proposition \ref{roro}, 
$\zeta$ is a $p$-integrable martingale, that is $\zeta_T \in L^p$ for all $p\geq 1$. We suppose that $\gamma > -1$, which implies that $\zeta_s >0$, $s \in  [0,T]$ a.s.
Let $Q$  be the probability measure equivalent to 
$P$ which admits 
 $\zeta_{T}$ as density with respect to $P$ on ${\cal G}_{T}$.

By Girsanov's theorem (see \cite{JYC} Chapter 9.4 Corollary 4.5), the process $W^{\beta}_t := W_t - \int_0^t \beta_s ds$ is a $Q$-Brownian motion and the process $M^{\gamma}$ defined as 
\begin{equation}\label{Girsanov}
M^{\gamma}_t  := M_t-\int_0^t\lambda_s\gamma_s ds= N_t-\int_0^t \lambda_s(1+\gamma_s) ds
\end{equation}
is a $Q$-martingale. 
We now show a representation theorem for $(Q, {\mathbb  G})$-local martingales with respect to $W^{\beta}$ and $M^{\gamma}$.

\begin{proposition}
Let  $m= (m_t)_{0\leq t \leq T}$ be a $(Q, {\mathbb  G})$-local martingale. There exists a unique pair of predictable processes $(z_t, k_t)$ such that 
\begin{equation}\label{alpharepres}
 m_t= m_0 + 
\int_0^t  z_s d W^{\beta}_s + \int_0^t  k_s dM^{\gamma}_s \quad 0 \leq s \leq T\quad {\rm a.s.}
\end{equation}
\end{proposition}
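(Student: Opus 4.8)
The plan is to transfer the problem to the reference probability $P$, where the martingale representation theorem (Lemma \ref{theoreme representation}) is available, and then translate the resulting integrands back through the Girsanov change of measure. The starting observation is that, since $\zeta_T$ is the density of $Q$ with respect to $P$ on $\mathcal{G}_T$ with associated strictly positive density process $(\zeta_s)$, a process $m$ is a $(Q,\mathbb{G})$-local martingale if and only if the product $(\zeta_s m_s)$ is a $(P,\mathbb{G})$-local martingale. Thus I would first set $X_s := \zeta_s m_s$ and record that $X$ is a $P$-local martingale.

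Next I would apply Lemma \ref{theoreme representation} to $X$: there exist predictable processes $a$ and $b$, integrable against $dW$ and $dM$ respectively, such that $X_t = m_0 + \int_0^t a_s\,dW_s + \int_0^t b_s\,dM_s$. The heart of the argument is then to recover $m = \zeta^{-1} X$ via the It\^o product formula. Using the Dol\'eans-Dade dynamics $d\zeta_s = \zeta_{s^-}(\beta_s\,dW_s + \gamma_s\,dM_s)$ from Proposition \ref{roro}, together with $\Delta\zeta_\vartheta = \zeta_{\vartheta^-}\gamma_\vartheta$ and $\gamma > -1$ (which guarantees $\zeta^{-1}$ is well defined), I would compute the dynamics of $\zeta^{-1}$ by It\^o's formula, then apply integration by parts to $m_t = \zeta_{t^-}^{-1} X_t$, collecting the $dW$, $dM$ and finite-variation terms. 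Rewriting the Brownian integral against $dW^{\beta} = dW - \beta\,dt$ and the jump integral against $dM^{\gamma} = dM - \lambda\gamma\,dt$, I would read off the candidate integrands
\begin{equation*}
z_s := \zeta_{s^-}^{-1} a_s - m_{s^-}\beta_s, \qquad k_s := \frac{1}{1+\gamma_s}\bigl(\zeta_{s^-}^{-1} b_s - m_{s^-}\gamma_s\bigr),
\end{equation*}
both of which are predictable. Since $m$ is a $(Q,\mathbb{G})$-local martingale its finite-variation part must vanish; concretely, the remaining $dt$ contributions cancel exactly. This drift cancellation, together with keeping track of the jump correction $\gamma/(1+\gamma)$ appearing both in $d\zeta^{-1}$ and in the bracket $d[\zeta^{-1},X]$, is the step I expect to be the main obstacle; once it is verified one obtains the representation \eqref{alpharepres}.

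Finally I would establish uniqueness. Suppose $(z^1,k^1)$ and $(z^2,k^2)$ both satisfy \eqref{alpharepres}; subtracting gives $\int_0^t (z^1_s - z^2_s)\,dW^{\beta}_s + \int_0^t (k^1_s - k^2_s)\,dM^{\gamma}_s = 0$. Under $Q$ the process $W^{\beta}$ is a continuous martingale while $M^{\gamma}$ is purely discontinuous, so the two stochastic integrals are orthogonal $Q$-martingales; isolating the continuous martingale part forces $z^1 = z^2$, $dt\otimes dQ$-a.e., and the purely discontinuous part forces $k^1 = k^2$ on $\{s \leq \vartheta\}$, i.e. $\lambda_s\,ds\otimes dQ$-a.e. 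Since $Q \sim P$, these equalities hold $dP$-a.e. as well, which gives uniqueness in the required senses.
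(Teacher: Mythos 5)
Your proposal is correct and follows essentially the same route as the paper: pass to $\bar m_t = \zeta_t m_t$, which is a $P$-local martingale, represent it via Lemma \ref{theoreme representation}, and recover the integrands by applying It\^o's formula to $m_t = \zeta_t^{-1}\bar m_t$ (your explicit formulas $z_s = \zeta_{s^-}^{-1}a_s - m_{s^-}\beta_s$ and $k_s = (1+\gamma_s)^{-1}(\zeta_{s^-}^{-1}b_s - m_{s^-}\gamma_s)$ do make the drift terms cancel, which is exactly the ``classical computations'' the paper leaves implicit). Your added uniqueness argument via orthogonality of the continuous and purely discontinuous parts is a sound way to fill in a detail the paper's proof does not spell out.
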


\dproof Since $m$ is a $Q$-local martingale, the process $\bar m_t:=\zeta_t m_t$ is a $P$-local martingale. 
By the martingale representation theorem (Lemma \ref{theoreme representation}), there exists a unique pair of predictable processes $(Z,K)$  such that 
\begin{equation*}
 \bar m_t= \bar m_0 + 
\int_0^t  Z_s d W_s + \int_0^t K_s dM_s \quad 0 \leq t \leq T\quad {\rm a.s.}
\end{equation*}
Then, by applying It\^o's formula to $m_t= \bar m_t (\zeta_t)^{-1}$ and by classical computations, one can derive the existence of $(z,k)$ satisfying $(\ref{alpharepres})$.
\fproof

From this result together with Proposition \ref{p} and Remark \ref{faible}, we derive the following corollary.
\begin{corollary}
Let $p\geq 2$ and let $T >0$. Let  $g$ be a $\lambda$-admissible driver such that $g(t,0,0,0)$ $\in$ $\H^{p}_Q$. Let $\xi \in {L}_Q^p({\cal G_T})$. There exists a unique solution $(Y, Z, K)$  in 
$ \mathcal{S}_Q^p \times {\mathbb H}_Q^p \times  {\mathbb H}^p_{Q, \lambda}$ of  the BSDE with default:
\begin{equation*}
-dY_t = g(t,Y_t, Z_t,K_t ) dt  -  Z_t W^{\beta}_t - K_t dM^{\gamma}_t; \quad
Y_T=\xi.
\end{equation*}
\end{corollary}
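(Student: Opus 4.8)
The plan is to transfer the $L^p$ well-posedness of Proposition \ref{p} from the primitive measure $P$ to the equivalent measure $Q$. The whole point is that, for the purposes of the existence/uniqueness argument, the data $(\mathbb{G}, Q, W^{\beta}, M^{\gamma})$ enjoy exactly the same structure as $(\mathbb{G}, P, W, M)$: by Girsanov's theorem $W^{\beta}$ is a $Q$-Brownian motion and $M^{\gamma}= N - \int_0^{\cdot}\lambda_s(1+\gamma_s)\,ds$ is a $Q$-martingale (the compensated default process, now with $Q$-intensity $\lambda(1+\gamma)$), and the proposition established just above — the representation \eqref{alpharepres} — provides a $(Q,\mathbb{G})$-martingale representation theorem with respect to $W^{\beta}$ and $M^{\gamma}$, playing under $Q$ the role that Lemma \ref{theoreme representation} plays under $P$.

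The key device is Remark \ref{faible}: the proof of Proposition \ref{p} uses only (i) a $\mathbb{G}$-martingale representation theorem with respect to the two driving martingales and (ii) the $L^p$ a priori estimates, which in turn rest on the Lipschitz property of the driver and It\^o's formula. Neither ingredient requires that $\mathbb{G}$ be generated by the driving processes, nor does either privilege a particular equivalent probability measure. First I would check that the hypotheses carry over: $\lambda$-admissibility of $g$ is the pathwise Lipschitz estimate \eqref{lip}, which holds $dP\otimes dt$-a.s. and hence $dQ\otimes dt$-a.s. since $P\sim Q$, so it is unchanged by the change of measure; and $g(\cdot,0,0,0)\in \H^p_Q$, $\xi\in L^p_Q(\mathcal{G}_T)$ are the verbatim $Q$-analogues of the assumptions of Proposition \ref{p}. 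Then I would re-run the proof of Proposition \ref{p} word for word in the $Q$-framework — the $L^p$ a priori estimates under $Q$, followed by the fixed-point contraction on $\mathcal{S}^p_Q\times\H^p_Q\times\H^p_{Q,\lambda}$ — invoking \eqref{alpharepres} wherever the $P$-representation was used, to obtain existence and uniqueness of $(Y,Z,K)$.

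The main obstacle I anticipate is purely bookkeeping about the intensity. When It\^o's formula is applied under $Q$, the jump contribution to the a priori estimate is weighted by the $Q$-predictable quadratic variation $d\langle M^{\gamma}\rangle_s = \lambda_s(1+\gamma_s)\,ds$, so one must interpret the space $\H^p_{Q,\lambda}$ and the Lipschitz weight consistently with the $Q$-intensity $\lambda(1+\gamma)$ rather than with $\lambda$ itself. Since $\gamma>-1$ — so that this $Q$-intensity is nonnegative — and since $\lambda$, and hence $\lambda(1+\gamma)$, vanishes after $\vartheta$, precisely where the driver no longer depends on $k$, this bookkeeping causes no genuine difficulty and the $L^p$ estimates close exactly as under $P$. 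With this checked, the corollary follows at once from Proposition \ref{p}, Remark \ref{faible} and the representation \eqref{alpharepres}.
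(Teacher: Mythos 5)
Your proposal is correct and is essentially the paper's own argument: the paper proves this corollary in one line by combining the $(Q,\mathbb{G})$-martingale representation theorem \eqref{alpharepres}, Remark \ref{faible} (which states that Proposition \ref{p} only needs a representation theorem with respect to the driving martingales, not that $\mathbb{G}$ be generated by them), and Proposition \ref{p} re-read in the framework $(\mathbb{G},Q,W^{\beta},M^{\gamma})$. Your additional observation about interpreting the weight and the Lipschitz condition via the $Q$-intensity $\lambda(1+\gamma)$ is a sensible reading of the (implicit) conventions in the paper's statement and does not change the route.
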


\end{document}